\newcommand\textdot[1]{\stackon[1pt]{\csname text#1\endcsname}{.}}
\DeclarePairedDelimiter\ceil{\lceil}{\rceil}
\DeclarePairedDelimiter\floor{\lfloor}{\rfloor}
\theoremstyle{plain}
\newtheorem{thm}{Theorem}[section] 
\newtheorem{lemma}[thm]{Lemma}
\theoremstyle{definition}
\def\be{\begin{equation}}
\def\ee{\end{equation}}
\def\bea{\begin{eqnarray}}
\def\eea{\end{eqnarray}}
\definecolor{baby}{rgb}{0.96, 0.76, 0.76}
\newcolumntype{L}{>{$}l<{$}} % math-mode version of "l" column type
\definecolor{babypink}{rgb}{0.96, 0.76, 0.76}
\newcommand{\mycomment}[1]{}
\title{\boldmath On local fields invariant under the action of topological defects}
\author{Anatoly Konechny}
 \author{and Vasileios Vergioglou}
\affiliation{Department of Mathematics, Heriot-Watt University,\\
 Edinburgh EH14 4AS, U.K.}
\affiliation{Maxwell Institute for Mathematical Sciences,\\
Edinburgh, U.K.}
\emailAdd{A.Konechny@hw.ac.uk, vv2004@hw.ac.uk}
\abstract{In the context of rational conformal field theories (RCFT) we look into the problem of constructing and classifying 
pairs consisting of a local operator and a topological defect which commutes or anticommutes with it. We discuss the bulk and boundary versions of the problem. 
In the latter one considers a conformal boundary condition, a boundary operator on it and a junction with a topological defect. 
In the case of the charge conjugation modular invariant (anti-)commuting configurations in each problem can be obtained when a certain 
restriction on the fusion rules in realised.  We study the corresponding fusion rule problems in detail. While in the bulk case 
it reduces to realising the $a\times b = c$ fusion rule which was studied in \cite{Buican}, in the boundary it leads to a new type of 
problem. We obtain a full solution to this problem for the  $\mathrm{SU(3)}$ WZW theory, thus constructing a  class  of (anti-)commuting 
boundary operators and junctions in that theory, and suggest an approach  to general WZW theories.
}
\keywords{Boundary Conformal Field Theory, Defect Conformal Field Theory, Boundary Renormalization Group flows}
\definecolor{ashgrey}{rgb}{0.7, 0.75, 0.71}
\DeclareMathOperator{\Hom}{Hom}
\definecolor{mycolor}{HTML}{960018}
\definecolor{coralred}{rgb}{1.0, 0.25, 0.25}
\definecolor{pansypurple}{rgb}{0.47, 0.09, 0.29}
\definecolor{scarlet}{rgb}{1.0, 0.13, 0.0}
\definecolor{baby}{rgb}{0.96, 0.76, 0.76}
\definecolor{aqua}{rgb}{0.5, 1.0, 0.83}
\definecolor{fuc}{rgb}{0.76, 0.33, 0.76}
\newcommand{\dimh}{\operatorname{dim}\operatorname{Hom}}
\newcommand{\aff}{\widehat{\mathfrak{su}}(3)_k}
\newcommand{\kmin}{k_0^{\text{min}}}
\newcommand{\kmax}{k_0^{\text{max}}}
\newcommand{\affg}{\hat{\mathfrak{g}}}
\definecolor{apricot}{rgb}{0.98, 0.81, 0.69}
\definecolor{eslb}{HTML}{48D1CC}
\newcolumntype{a}{>{\columncolor{eslb}}c}
\newtheorem*{rep@theorem}{\rep@title}
\newcommand{\newreptheorem}[2]{%
\newenvironment{rep#1}[1]{%
 \def\rep@title{#2 \ref{##1}}%
 \begin{rep@theorem}}%
 {\end{rep@theorem}}}
\begin{document}

\tikzset{myptr/.style={
        decoration={markings,
            mark= at position 0.7 with {\arrow{#1}} ,
        },
        postaction={decorate}
    }
}

\tikzset{myptr1/.style={
        decoration={markings,
            mark= at position 0.4 with {\arrow{#1}} ,
        },
        postaction={decorate}
    }
}
\tikzset{middlearrow/.style={
        decoration={markings,
            mark= at position 0.5 with {\arrow{#1}} ,
        },
        postaction={decorate}
    }
}

\tikzset{myptr2/.style={
        decoration={markings,
            mark= at position 0.9 with {\arrow{#1}} ,
        },
        postaction={decorate}
    }
}

\tikzfeynmanset{
myblob/.style={
shape=circle,
draw=black}
}

%\newcolumntype{a}{>{\columncolor{babypink}}c}

\maketitle
\flushbottom

\section{Introduction}
\label{sec:intro}

In recent years constraints on RG flows arising from topological defects have been explored in a number of ways (see  e.g. \cite{Graham_Watts,5A,Sakura_etal1,Sakura_etal2,FGS,Gaiotto,Kom_etal,Cordova_etal,Kon1,TanakaNakayama} for  applications in two dimensions). 
In the context of two-dimensional CFTs topological line defects embody the symmetry charges some of which may be non-invertible. 
Given a local field, passing through it a topological defect line in general results in obtaining a configuration with  non-local defect fields.
This is illustrated on Figure \ref{figintro1} where the local and non-local field configurations  are separated out. 
\begin{figure}[H]
    \centering
    \begin{tikzpicture}
    \draw[thick][middlearrow={latex}] (-6,-2.1) -- (-6,1.5);
     \draw[black,fill=black] (-5,-0.25) circle (.5ex);
     \node at (-4.25,-0.2) {$\mathlarger{\mathlarger{=}}$}; 
      \node at (-5,-0.65) {$\phi_{I}$};
       \node at (-6.45,1) {$X_d$};
\draw [thick] (-2,0) arc
    [
        start angle=15,
        end angle=345,
        x radius=0.8cm,
        y radius =0.8cm
    ] ;
    \draw[black,fill=black] (-2.8,-0.25) circle (.5ex);
    \node at (-2.8,-0.65) {$\phi_{I}$};
    \draw[thick] (-2,0) -- (-1.2,0);
    \draw[thick] (-2,-0.4) -- (-1.2,-0.4);
   \draw[thick][middlearrow={latex}] (-1.2,0) -- (-1.2,1.5);
    \draw[thick] (-1.2,-0.4) -- (-1.2,-2.1);
    \node at (-1.6,1) {$X_d$};
    \node at (0.1,-0.2) {$\mathlarger{\mathlarger{=}}\;\mathlarger{\sum_J}Q^0_{IdJ}$}; 
    \draw[black,fill=black] (1.5,-0.25) circle (.5ex);
\node at (1.5,-0.65) {$\phi_{J}$};
\draw[thick][middlearrow={latex}] (2.2,-2.1) -- (2.2,1.5);
\node at (1.8,1) {$X_d$};
\node at (2.9,-0.2) {$\mathlarger{\mathlarger{+}}$}; 
    \node at (4,-0.5) {$\;\mathlarger{\sum_{\substack{J\\e\neq 0\\ \alpha,\beta}}}Q^{\alpha e\beta}_{IdJ}$};
     \draw[black,fill=black] (5.5,-0.2) circle (.5ex);
      \draw[thick][middlearrow={latex}] (5.5,-0.2) -- (7,-0.2);
      \node  at (6.2,0.1) {$X_e$};
      \node at (5.6,-0.8) {$\mu^{e;\beta}_{J}$};
      \node at (7.1,-0.2) (jun) [circle,draw,scale=0.6] {};
      \node at (7.4,-0.2) {$\alpha$};
      \draw[thick][middlearrow={latex}] (jun) -- (7.1,1.5);
      \node at (6.7,1) {$X_d$};
      \draw[thick][middlearrow={latex}] (7.1,-2.1) -- (jun);
      \node at (6.7,-1.5) {$X_d$};
\end{tikzpicture}
    \caption{Sweeping a topological defect across a bulk field in a generic RCFT.}
    \label{figintro1}
\end{figure}
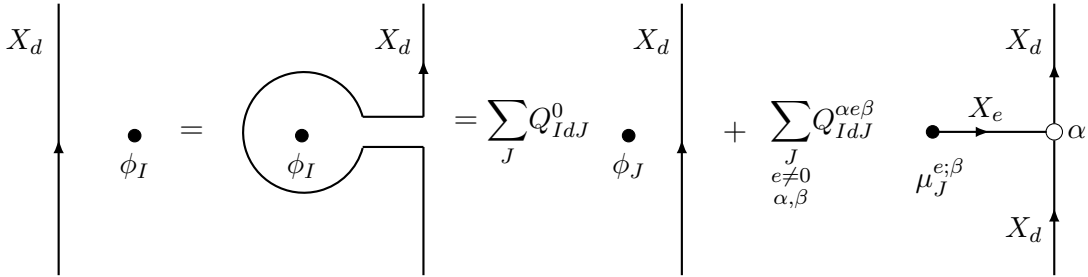

On Figure   \ref{figintro1} the local fields are labeled by the index $I$ while
the sum  in the non-local part runs  over three indices: $e$, $\alpha$ and $\beta$. The index $e$ labels all irreducible topological defects different 
from the identity   (the local field contribution is formally associated with the 
identity defect), $\beta$ labels a basis of the defect fields $ \mu_{I}^{e;\beta}$ residing at the end of $X_{e}$ and the index $\alpha$ 
 labels a basis of topological junctions 
between the defect $X_{e}$ and the original defect $X_{d}$. 
The numbers $Q^{\alpha e\beta}_{IdJ}$ are some model-dependent coefficients.

For rational CFTs  the rule depicted in Figure \ref{figintro1} was studied in \cite{FFRS} and \cite{Runkelcharges} (see  also \cite{KW} for a nice discussion in the context of the Ising model).
The situation when the coefficients $Q^{\alpha e\beta}_{IdJ}$  in the non-local part all vanish  is of particular interest in applications to RG flows. 
In the simplest situation passing the defect results in multiplying the field $\phi_{I}$ by a constant. 
 If this constant is 1 the 
defect $X_d$ commutes with $\phi_I$ and if -1 it anti-commutes. 
Perhaps the simplest example of this situation is the spin-reversal defect in the critical Ising model which commutes with the thermodynamic -- $\epsilon$  
field and anti-commutes with the magnetic field -- $\sigma$. Another pair in the critical Ising model is the Kramers-Wannier defect that anti-commutes with $\epsilon$.
As discussed in \cite{5A} in each case there are implications for the RG flow triggered by 
$\phi_I$ assuming that this field is relevant. Thus, in the case of commutation the corresponding topological defect survives the deformation and must be realised in the IR fixed point (which may be topological if the perturbed theory is gapped). In the anti-commutation case there must be a defect with the same quantum numbers linking the IR fixed points corresponding to the different signs of the coupling for operator $\phi_I$. 

More generally we can envisage a situation when passing through a defect results in a linear combination of local fields with the same quantum numbers, see e.g. the case of perturbed 3-state Potts model discussed in \cite{5A}. In such a case, assuming the fields are relevant, we can consider 
a multi-coupling  deformation  in which these fields  are simultaneously switched on. The corresponding topological defect  should exist between the IR end-points on the trajectories related by the action of the defect. Having this general situation in mind we can say that  the relevant topological defects leave invariant some subspace of local operators. For applications to unitary theories one should add to this the requirement that the corresponding invariant subspace must be Hermitian. 

Although computing the coefficients $Q^{\alpha e\beta}_{IdJ}$ and finding  topological defects that leave invariant some (Hermitian) subspaces of local fields has not been 
systematically addressed, it has been noted that in certain situations the fusion rules give enough information to construct  some  local fields invariant 
under topological defects. To discuss this case in more detail let us specialise to RCFTs with the charge conjugation modular invariant. In this case the bulk fields 
are labeled by a pair $I=(i, \bar i)$ where $i$ labels a chiral irreducible representation with $\bar i$ labelling its conjugate. 
After passing a topological  defect the local part is proportional to the original field while the intermediate defects $e$ are restricted by the fusion rules 
\cite{PZ} as depicted on Figure \ref{figintro2}. 

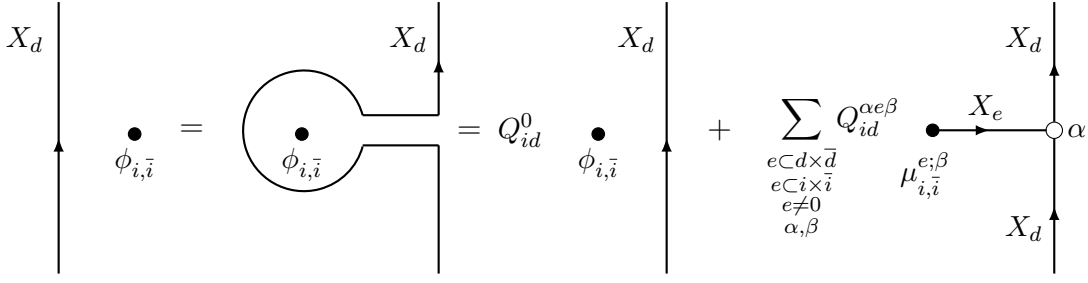
\begin{figure}[H]
    \centering
    \begin{tikzpicture}
    \draw[thick][middlearrow={latex}] (-6,-2.1) -- (-6,1.5);
     \draw[black,fill=black] (-5,-0.25) circle (.5ex);
     \node at (-4.25,-0.2) {$\mathlarger{\mathlarger{=}}$}; 
      \node at (-5,-0.65) {$\phi_{i,\overline{i}}$};
       \node at (-6.45,1) {$X_d$};
\draw [thick] (-2,0) arc
    [
        start angle=15,
        end angle=345,
        x radius=0.8cm,
        y radius =0.8cm
    ] ;
    \draw[black,fill=black] (-2.8,-0.25) circle (.5ex);
    \node at (-2.8,-0.65) {$\phi_{i,\overline{i}}$};
    \draw[thick] (-2,0) -- (-1,0);
    \draw[thick] (-2,-0.4) -- (-1,-0.4);
   \draw[thick][middlearrow={latex}] (-1,0) -- (-1,1.5);
    \draw[thick] (-1,-0.4) -- (-1,-2.1);
    \node at (-1.4,1) {$X_d$};
    \node at (-0.2,-0.2) {$\mathlarger{\mathlarger{=}}\;\;Q^0_{id}$}; 
    \draw[black,fill=black] (1.1,-0.25) circle (.5ex);
\node at (1.1,-0.65) {$\phi_{i,\overline{i}}$};
\draw[thick][middlearrow={latex}] (2,-2.1) -- (2,1.5);
\node at (1.6,1) {$X_d$};
\node at (2.7,-0.2) {$\mathlarger{\mathlarger{+}}$}; 
    \node at (4,-0.7) {$\quad\mathlarger{\sum_{\substack{e\subset d\times \overline{d}\\e\subset i\times \overline{i}\\e\neq 0\\ \alpha,\beta}}}Q^{\alpha e\beta}_{id}$};
     \draw[black,fill=black] (5.5,-0.2) circle (.5ex);
      \draw[thick][middlearrow={latex}] (5.5,-0.2) -- (7,-0.2);
      \node  at (6.2,0.1) {$X_e$};
      \node at (5.4,-0.8) {$\mu^{e;\beta}_{i,\overline{i}}$};
      \node at (7.1,-0.2) (jun) [circle,draw,scale=0.6] {};
      \node at (7.4,-0.2) {$\alpha$};
      \draw[thick][middlearrow={latex}] (jun) -- (7.1,1.5);
      \node at (6.7,1) {$X_d$};
      \draw[thick][middlearrow={latex}] (7.1,-2.1) -- (jun);
      \node at (6.7,-1.5) {$X_d$};
\end{tikzpicture}
    \caption{Sweeping a topological defect across a bulk field in a RCFT with charge conjugation modular invariant.}
    \label{figintro2}
\end{figure}

Using the TFT approach \cite{FRS1,FRS2,FRS3,FRS4,FRS5} we find the following general expression for the coefficients $Q^{\alpha e\beta}_{id}$ 
in terms of the braiding and fusing matrices
\begin{equation}\label{Qgeneral}
    Q^{\alpha e\beta}_{id}=\sum_{j\subset i\times d}\;\sum_{\rho,\mu,\nu,\gamma}\mathrm{R}^{(id)j}_{\mu\rho}\, \mathrm{R}^{(di)j}_{\rho\nu}\, \mathrm{G}^{(di\overline{i})d}_{\gamma j \nu,\beta e \alpha}\,\mathrm{F}^{(di\overline{i})d}_{0,\mu j\gamma} \, .
\end{equation}
Our notational conventions are explained in Appendix \ref{appendix0} where we also give the details of the calculation\footnote{The expression we give generalises the cases considered in  \cite{FFRS}, \cite{Runkelcharges}. While the computation is a straightforward application of the TFT approach we are not aware of a paper where it appears in full generality.}.
The coefficient of the local part can be expressed in terms of the modular $S$-matrix  \cite{PZ}
\be
Q^0_{id} = \frac{S_{id}}{S_{00} {\rm dim}(i) {\rm dim}(d)}
\ee
where ${\rm dim}(i)=\frac{S_{i0}}{S_{00}}$ stands for the quantum dimension of the corresponding representation.

One known set of examples  comes from $A$-series minimal models.
In a minimal model ${\cal M}_{p,p'}$ the local fields as well as  the topological defects are labeled by an entry from the corresponding Kac table: $ (r,s)$, $1\le r\le p'-1$, $1\le s \le p-1$.   The fusion rule of the model is such that all representations of the form $(1,s)$ form a subring as well as those of the form $(r,1)$. 
The only common elements in the two subrings are the identity representation $(1,1)$ and the simple current $(p'-1,1)\equiv (1,p-1)$.
   Choosing $d$ to be given by  $(r,1)$ and $i=\bar i=(1,s)$ with $r\ne p'-1$ and $s\ne p-1$ we obtain that the only common irreducible representation 
   that appears in both $d\times d$ and $i\times i$ is the identity representation. Hence in this case there can be no non-local terms on the right hand side of 
Figure \ref{figintro2} and we have a local field invariant under a topological defect. 

More generally, in an RCFT with a charge conjugation modular invariant we have a local field $\phi_{i,\overline{i}}$ invariant under $X_{d}$ whenever 
\be \label{dimHom}
{\rm dim}\, {\rm Hom}(i\times \bar i\times d \times \bar d, \mathbb{1}) = 1
\ee
where $\mathbb{1}$ stands for the identity representation.
Equation (\ref{dimHom}) is equivalent to saying that the only common irreducible representation among those that appear both in $i\times \bar i$ and in $d\times \bar d$ is 
the identity representation. Another way to read (\ref{dimHom}) is by stating that the fusion product $i\times d$ is itself an irreducible representation. 
If either $i$ or $d$ is a simple current this is always true but as the above example of minimal models shows that's not the only possibility.
The problem of finding such situations when neither $i$ nor $d$ is a simple current was studied in \cite{Buican} in a broader context of fusion categories. 
In particular it was noted in that paper that, using the results of \cite{WZWtadpole}, one can show that  there are no solutions to (\ref{dimHom}) for WZW theories apart from the case when at least one of the representations is a simple current.

% BOUNDARY ---------------------------

While in the bulk case the problem of constructing invariant operators using  fusion rules has been addressed in  \cite{Buican} the analogous boundary case has been not and is the main focus of the present paper. In the boundary case we consider a boundary operator $\Psi_i^\beta$ where the index $i$ labels an irreducible representation of the chiral algebra and $\beta$ is a degeneracy label. The operator $\Psi_i^\beta$ in general is a boundary condition changing operator 
linking two irreducible conformal boundaries: $M_a$ and $M_{b}$. Consider next a topological defect $X_{x}$ that ends topologically on a conformal 
boundary. Without loss of generality we can assume that it links two irreducible conformal boundaries: $M_b$ and $M_{c}$.
Passing the defect-boundary junction through the boundary field results in a sum over similar configurations as depicted on Figure \ref{figintro4}.
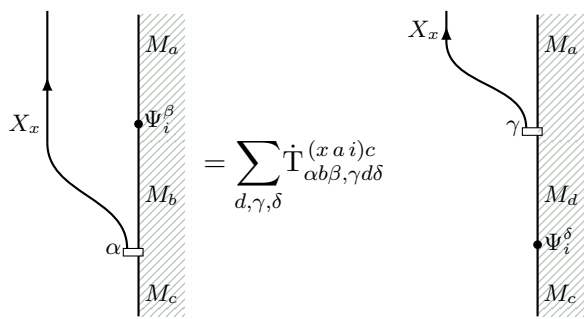
\begin{figure}
    \centering
    \begin{equation*}
     \vcenter{\hbox{\hspace{-10mm}\begin{tikzpicture}[font=\footnotesize,inner sep=2pt]
  \begin{feynman}
   \path[pattern=north east lines,pattern color=ashgrey,very thin] (0.01,4) rectangle (0.6,0);
  \vertex (j1) at (0,0);
  \vertex [above=0.8 of j1] (j2);
  \vertex[ above =0.9 of j1] (j3);
  \vertex [small,dot] [ above=2.5cm of j1] (psi)  {};
  \vertex[right=0.2cm of psi] (psi1) ;
  \vertex[below=0.2cm of psi1] (psi2) [label=\(\,\,\,\Psi_i^\beta\)];
  \vertex [above=3.4cm of j1] (j4) ;
  \vertex[above=3.5cm of j1] (j5)  ;
  \vertex[ left=0.15cm of  j4] (j7);
  \vertex [left=0.15cm of j3] (j8) [label=left:\(\alpha\)]; 
  \vertex [above =4cm of j1] (j6);
  \vertex [below=0.4cm of j6] (p1)  [label=right:\(M_a\)];
  \draw  (0.06,0.8) rectangle (-0.2, 0.9);
    \vertex[below=0.1cm of j6] (m1) ;
    \vertex[above=1.6cm of j1] (m3) [label=right:\(M_b\)];
    \vertex[above=0.9cm of psi] (m2);
    \vertex[left=1.2cm of psi] (x) [label=left:\(X_x\)];
    \vertex[above=0.3 of j1] (jj) [label=right:\(M_c\)];
    \vertex [below=0.25cm of x] (x1);
    \vertex [above=0.25cm of x] (x2);
    \vertex[above=1.5cm of x] (xtop);
    \draw [thick]    (j8) to[out=90,in=-90] (x1);
    \draw [thick] [middlearrow={latex}]   (x1) to (xtop);
   \diagram*{
    (j1) --[thick] (j2),
    (j3) --[thick] (psi),
    (psi) --[thick] (j6)
  };
  \end{feynman}
\end{tikzpicture}}}
~=\mathlarger{\sum}_{d,\gamma,\delta}{\mathrm{\dot{T}}}^{\,(x\,a\,i)c}_{\alpha b\beta,\gamma d\delta}~
\vcenter{\hbox{\hspace{1mm}\begin{tikzpicture}[font=\footnotesize,inner sep=2pt]
  \begin{feynman}
   \path[pattern=north east lines,pattern color=ashgrey,very thin] (0.01,4) rectangle (0.6,0);
  \vertex (j1) at (0,0);
  \vertex [above=2.4 of j1] (j2);
  \vertex[ above =2.5 of j1] (j3);
  \vertex [small,dot] [ above=0.9cm of j1] (psi)  {};
  \vertex[right=0.2cm of psi] (psi1);
  \vertex[below=0.2cm of psi1] (psi2) [label=\(\,\,\,\Psi_i^\delta\)];
  \vertex [above=3.4cm of j1] (j4) ;
  \vertex[above=3.5cm of j1] (j5)  ;
  \vertex[ left=0.15cm of  j4] (j7);
  \vertex [left=0.15cm of j3] (j8) [label=left:\(\gamma\)]; 
  \vertex [above =4cm of j1] (j6);
  \vertex[below=0.4cm of j6] (p2) [label=right:\(M_a\)];
  \draw (0.06,2.4) rectangle (-0.2, 2.5);
    \vertex[below=0.1cm of j6] (m1) ;
    \vertex[above=1.6cm of j1] (m3) [label=right:\(M_d\)];
    \vertex[above=0.9cm of j2] (m2) ;
    \vertex[above=0.5cm of j4] (xx);
    \vertex[left=1.2cm of xx] (x);
    \vertex[below=0.1cm of x] (x100)  [label=left:\(X_x\)];
    \vertex[above=0.3 of j1] (jj) [label=right:\(M_c\)];
    \vertex [below=0.25cm of x] (x1);
    \vertex [above=0.15cm of x] (x2);
    \vertex[left=0.25cm of x2] (x22);
    \vertex[above=0.6cm of j6] (j7);
    \vertex[right=0.25cm of j7] (j77);
    \vertex[left=0.25cm of x2] (x22);
    \vertex[above=0.05cm of x] (x3);
    \vertex[left=0.25cm of x3] (x33);
    \vertex[above=0.1cm of x33] (x333);
    \vertex[above=0.4cm of x2] (x4);
    \draw [thick]    (j8) to[out=90,in=-90] (x1);
    \draw [thick] [middlearrow={latex}]   (x1) to (x2);
   \diagram*{
    (j1) --[thick] (j2),
    (j3) --[thick] (j6)
  };
  \end{feynman}
\end{tikzpicture}}}
\end{equation*}
    \caption{Moving an open topological defect past a boundary field.}
    \label{figintro4}
\end{figure}
The configurations on the right hand side of Figure \ref{figintro4} are weighed by the coefficients 
 which are elements of a set of fusing matrices that we denoted $\dot{\mathrm{T}}$. This fusing matrix and its properties have been discussed in 
 \cite{paper1}. While in general they are distinct from the fusing matrix of the chiral algebra, in the case of the charge conjugation modular invariant, 
 in a suitable basis, they coincide:
 \be
 {\mathrm{\dot{T}}}^{\,(x\,a\,i)c}_{\alpha b\beta,\gamma d\delta} = {\mathrm{F}}^{\,(x\,a\,i)c}_{\alpha b\beta,\gamma d\delta} \, .
 \ee 
The fusing matrix coefficients $\mathrm{F}$ are non-vanishing only if the following restrictions from the fusion rules are satisfied 
\bea \label{fus_conds}
&& x\times a \supset d  \, , \qquad d\times i \supset c  \, , \nonumber \\
&& a\times i \supset b \, , \qquad x \times b \supset c 
\eea  
where the $\supset$ symbol reads as "contains".

Of particular interest is the situation when $a=b=c$ when we have a boundary field and a topological junction with the same irreducible boundary 
condition. The conditions (\ref{fus_conds}) are then equivalent to 
\bea \label{fus_conds2}
&& a \times \bar a \supset i  \, , \qquad a\times \bar a \supset x  \, , \nonumber \\
&& a\times \bar d \supset i \, , \qquad \bar{a} \times  d \supset x \, . 
\eea  
These conditions are always satisfied by $d=a$ which can be separated out and plays the role similar to the local part of the bulk case. 
If there are no terms with $d\ne a$ present in the right hand side of Figure \ref{figintro4} then passing the defect $X_{x}$ through the boundary 
field $\Psi_i^\beta$ leaves behind a superposition of boundary fields (with the same chiral algebra representation)   on the {\it same boundary 
condition}. We can say that we have an invariant subspace of boundary fields. If these fields  are relevant we get a constraint on the 
multi coupling space of boundary RG flows triggered by them. In the simplest situation when we have a single field commuting with a topological junction 
the junction survives the flow and must be present at the IR fixed point boundary condition. If moving the junction $X_{x}$ past the boundary field results in 
multiplying the field by -1, as in the bulk case, the defect $X_{x}$ must have a topological junction between the two IR fixed point for the opposite signs of the coupling.

It follows from (\ref{fus_conds2}) that, in the charge conjugation modular invariant case, we can ensure an invariant one-dimensional subspace if we have a triple of representations: $a,x,i$ such that $x$ and $i$ each appears in $a\times \bar a$ with multiplicity 1 so that 
\be \label{mult1_cond}
{\rm dim}\, {\rm Hom}(a\times \bar a \times i \,  , {\mathbb 1}) = 1 \, , \quad {\rm dim}\, {\rm Hom}(a\times \bar a \times  x\, , {\mathbb 1}) = 1 
\ee
and satisfy 
\be \label{fus_cond3}
{\rm dim}\, {\rm Hom}(a\times \bar a \times i \times  x\, , {\mathbb 1}) = 1 \, .
\ee
The latter condition is equivalent to having $d=a$  as the only common representation that appears both in $a\times  i$ and in $a\times  \bar{x}$. 
We will refer to triples of representations $a,x,i$ satisfying (\ref{mult1_cond}), (\ref{fus_cond3}) as "special triples". 
Although (\ref{fus_cond3}) appears to be similar to (\ref{dimHom}) it is a different problem that in general involves three different representations. 
However, as in the bulk case, it is easy to construct solutions via simple currents in the following way. Suppose $i$ is a simple current and it appears in the fusion $a\times \bar{a}$. Equivalently, the boundary condition $a$ is invariant under the action of $i$, that is $i\times a=a$. In this case  $i$ appears in $a\times \bar a$ with multiplicity 1. Furthermore, since $i\times a \times \bar a  = a \times \bar a$ 
the complete set of representations that appears in $a\times \bar{a}$ is invariant under the action of $i$. As this action preserves the multiplicities 
it suffices to pick for $x$ any representation that appears in $a\times \bar a$ with multiplicity 1. This ensures that (\ref{fus_cond3}) is satisfied. 
We will show in the later sections of the paper that for the $\mathrm{SU(2)}$ and $\mathrm{SU(3)}$ WZW models with the charge conjugation modular invariant 
 this construction with simple currents exhausts all possible special triples of representations. 

It should be emphasised that the bulk equation (\ref{dimHom}) and the boundary equations (\ref{mult1_cond}), (\ref{fus_cond3}) only provide 
sufficient conditions for constructing fields invariant under the action of topological  defects. There may be other solutions for which some coefficients 
vanish even though they are allowed by fusion rules. It would be interesting to find a method for constructing all possible solutions. 

The rest of the paper is organised as follows. In section \ref{sec2} we discuss the specialisation of equations (\ref{mult1_cond}), (\ref{fus_cond3}) to WZW 
theories.  In section \ref{sec3} we prove that the only special triples the in $\mathrm{SU(2)}$ WZW theory come from simple currents. 
Section \ref{sec4} proves the same for $\mathrm{SU(3)}$. Appendix A contains details of the calculation leading to formula (\ref{Qgeneral}). Appendix B contains some technical details 
of the $\mathrm{SU(3)}$ special triples  analysis in section \ref{sec4}.

%%%%%%%%%%

\section{WZW theories and simple currents} \label{sec2}

We consider  WZW models \cite{WZW1,WZW2,WZW6} based on    a simple Lie group $G$. 
As we are only interested in the fusion rule in this section we briefly introduce the basic representation-theoretic aspects of these theories.

Let $\mathfrak{g}$ be the Lie algebra of $G$ and $P_+$  the set of highest weights of the irreducible highest weight representations of $\mathfrak{g}$. The symmetry algebra of such models at level $k$ is $\hat{\mathfrak{g}}_k\times \hat{\mathfrak{g}}_k$ where $\hat{\mathfrak{g}}_k$ denotes the untwisted affine Lie algebra associated with $\mathfrak{g}$. The affine chiral primary fields $\phi_i$ at level $k$ correspond to  irreducible integrable highest weight representations $\{V_{\hat{\lambda}}\,|\,\hat{\lambda}\in P_+^k\}$ of $\hat{\mathfrak{g}}_k$, where we denoted the set of highest weights of such representations by $P_+^k$. For fixed $k$, each affine integrable highest weight $\hat{\lambda}\in P_+^k$ corresponds uniquely to a weight $\lambda\in P_+$ and vice versa.   To simplify notation we will sometimes denote an integrable representation $V_{\hat{\lambda}}$ by just its affine highest weight $\hat{\lambda}$.

\mycomment{
For given $G$ the category  $\mathcal{C}_k(\mathfrak{g})$ generated by the irreducible   integrable $\affg_k$ representations  is a modular tensor category \cite{WZW3,WZW4,WZW5}. The simple objects of this category are $\{V_{\hat{\lambda}}\,|\,\hat{\lambda}\in P_+^k\}$ and the affine tensor product will be denoted by $\times$ as opposed to the tensor product $\otimes$ of $\mathfrak{g}$-representations. To simplify notation we will sometimes denote an integrable representation $V_{\hat{\lambda}}$ by just its affine highest weight $\hat{\lambda}$.
}
The   fusion product of $\affg_k$-representations has the following  direct sum decomposition 
\begin{equation}
    \hat{\lambda}\times \hat{\mu}=\bigoplus_{\hat{\nu}\in P_+^k} N^{(k)\nu}_{\lambda,\mu}\, \hat{\nu}\,.
\end{equation}
where
\begin{equation}
     N^{(k)\nu}_{\lambda,\mu}:= \dimh(V_{\hat{\lambda}}\times V_{\hat{\mu}},V_{\hat{\nu}})\,
\end{equation}
are the multiplicities of the irreducible representations $\hat \nu$. 

We further define the following subsets of interest
\begin{align}
    \mathcal{R}^k_{\lambda,\mu}&:=\{\hat{\nu}\in P_+^k\,|\, N^{(k)\nu}_{\lambda,\mu}\neq 0\} \nonumber\\
    \mathcal{R}^{(1);k}_{\lambda,\mu}&:= \{\hat{\nu}\in P_+^k\,|\, N^{(k)\nu}_{\lambda,\mu}= 1\} \,.
\end{align}

Suppose now  $P_+^k=\{\hat{\lambda}_0,\ldots,\hat{\lambda}_m\}$ where $\hat{\lambda}_0$ denotes the vacuum representation. Among the  chiral primary fields $\{\phi_0,\ldots,\phi_m\}$ we distinguish the subset of simple currents \cite{Schellekens01,Schellekens02,Fuchs01} $\mathcal{J}=\{J_0,\ldots,J_l\}$.
Each simple current is a primary field $J_{n}=\phi_{j_{n}}$ for which the representation $\hat \lambda_{j_{n}}$ satisfies
$N^{(k)i}_{j_n,\overline{j_n}}=\delta_{i,0}$ where $\overline{j}$ denotes the highest weight of the conjugate representation $\overline{V}_{\hat{\lambda}_j}$.
The fusion rules for simple currents with any  primary field 
are of the form 
\begin{equation}
    J_i\times \phi_j = \phi_{j^\prime}\, .
\end{equation}
%A primary field $\phi_j$ is a simple current if and only if $N^{(k)i}_{j,\overline{j}}=\delta_{i,0}$ 

Let us now comment on the rational boundary conditions and the topological defects of such models. As we restrict ourselves to the charge conjugation modular invariant the conformal  boundary conditions and topological defects are labelled by the same set that labels primary fields, namely $P_+^k$. In addition, the tensor product structures  between topological defects and between topological defects and boundary conditions coincide with the fusion product structure between the primary fields.

Equations (\ref{mult1_cond}), (\ref{fus_cond3}) from the introduction in the context of WZW theories  are equivalent to having 
a triple of integrable highest weights $\hat{\lambda},\hat{\mu},\hat{\nu}\in P_+^k$ such that 
\be \label{wzwproperty0}
\hat{\mu},\hat{\nu}\in \mathcal{R}^{(1);k}_{\lambda,\overline{\lambda}} 
\ee and 
\begin{equation}\label{wzwproperty}
    \dimh\left( \hat{\lambda}\times \overline{\hat{\lambda}} \times \hat{\mu}\times \hat{\nu},\hat{\lambda}_0\right)=1\,.
\end{equation}
Here $\hat{\lambda}$ corresponds to a (symmetry-preserving) boundary condition, $\hat{\mu}$ to a topological defect and $\hat{\nu}$ to a boundary field on $\hat{\lambda}$.

As discussed in the introduction the special triples satisfying (\ref{wzwproperty0}), (\ref{wzwproperty}) can be constructed using simple currents. 
Given   a simple current $\hat{\mu}$ and a boundary condition  $\hat{\lambda}$  invariant under fusion with $\hat{\mu}$
%, that is $\hat{\lambda} \times \hat{\mu}= \hat{\lambda}$. 
equation \eqref{wzwproperty} reduces to 
\begin{equation}\label{wzwproperty2}
     \dimh\left( \hat{\lambda}\times \overline{\hat{\lambda}} \times \hat{\nu},\hat{\lambda}_0\right)=1 \Longleftrightarrow  \dimh\left( \hat{\lambda}\times \overline{\hat{\lambda}} , \overline{\hat{\nu}} \right)=1
\end{equation}
that means that for the third representation in the triple we can choose any $\hat{\nu}\in \mathcal{R}^{(1);k}_{\lambda,\overline{\lambda}}$.
In the next two sections we are going to show that in the case of $G=\mathrm{SU(2)}$ and $G=\mathrm{SU(3)}$ such triples exhaust all special triples. % solving 
%(\ref{wzwproperty0}), (\ref{wzwproperty}).
%so that 
%Since $\hat{\nu}\in \mathcal{R}^{(1);k}_{\lambda,\overline{\lambda}}$ we have $\overline{\hat{\nu}}\in \mathcal{R}^{(1);k}_{\overline{\lambda},\lambda}$, but by the symmetry of the fusion product we obtain $\overline{\hat{\nu}}\in \mathcal{R}^{(1);k}_{\lambda,\overline{\lambda}}$. Therefore, \eqref{wzwproperty2} is satisfied for any $\hat{\nu}\in \mathcal{R}^{(1);k}_{\lambda,\overline{\lambda}}$. \textcolor{red}{In the following we will show that triples obtained in this way exhaust all the triples that satisfy \eqref{wzwproperty} in $\mathrm{SU(2)}$ and $\mathrm{SU(3)}$ WZW models.}

\section{A warm up problem: $\mathrm{SU(2)}$ } \label{sec3}

As a warm up  we consider in this section the  problem of finding all special triples  in the $\mathrm{SU(2)}$ WZW models. In this case the set of highest weights of  integrable $\widehat{\mathfrak{su}}(2)_k$ representations is $P_+^k=\{0,1,\ldots,k\}$. We  denote by $(i)$ the representation with highest weight $i\in P_+^k$. The explicit form of the  $\widehat{\mathfrak{su}}(2)_k$ fusion product is
 \begin{equation}\label{su2fus}
        (i)\times(j)=\bigoplus_{p=|i-j|}^{\operatorname{min}(i+j,2k-i-j)}(p)
    \end{equation}
 where $p$ increases in steps of $2$. We also recall that every $\widehat{\mathfrak{su}}(2)_k$-representation is self-conjugate, all fusion multiplicities are at most $1$ and the simple current set is $\mathcal{J}_2=\{0,k\}$.  Therefore, our goal for the $\mathrm{SU(2)}$ case can be reformulated as: find $a,i,j\in P_+^k$ with $i,j\in \mathcal{R}^k_{a,a}$ such that
 \begin{equation}\label{propertysu2}
      \operatorname{dim}\operatorname{Hom}\left((a)\times(a)\times(i)\times(j),(0)\right)=1\,.
 \end{equation}
The solution is provided by the following theorem.
\begin{thm}\label{su2thm}
   Let $a\in P_+^k\setminus \mathcal{J}_2$\footnote{If $(a)$ is any simple current then we have a trivial situation since $\mathcal{R}^k_{a,a}=\{(0)\}$.} and  $i,j\in P_+^k$ such that 
    \begin{equation}\label{con}
    i,j\in \mathcal{R}^k_{a,a}\,.
    \end{equation}
    Then if $i,j\notin \mathcal{J}_2$ the following holds
   \begin{equation}
       \operatorname{dim}\operatorname{Hom}\left((a)\times(a)\times(i)\times(j),(0)\right) >1 \,.
   \end{equation}
\end{thm}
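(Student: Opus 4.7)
My plan is to recast the statement as a counting problem for common summands of two $\widehat{\mathfrak{su}}(2)_k$ fusion products and then reduce it to a handful of elementary inequalities that follow from the exclusion of simple currents.

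The first step is to use associativity together with the self-conjugacy of every $\widehat{\mathfrak{su}}(2)_k$-representation to rewrite
\begin{equation*}
\dimh\bigl((a)\times(a)\times(i)\times(j),(0)\bigr)\;=\;\sum_{r\in P_+^k} N^{(k)r}_{a,a}\,N^{(k)r}_{i,j}\,.
\end{equation*}
Since every $\widehat{\mathfrak{su}}(2)_k$ fusion multiplicity is $0$ or $1$, this number equals $|\mathcal{R}^k_{a,a}\cap \mathcal{R}^k_{i,j}|$, so the theorem reduces to exhibiting at least two common elements of these two sets.

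The second step is to extract both sets from the explicit fusion rule \eqref{su2fus}: $\mathcal{R}^k_{a,a}=\{0,2,\ldots,L_1\}$ with $L_1=\min(2a,2k-2a)$, and $\mathcal{R}^k_{i,j}=\{|i-j|,|i-j|+2,\ldots,L_2\}$ with $L_2=\min(i+j,2k-i-j)$. The hypothesis $i,j\in \mathcal{R}^k_{a,a}$ forces $i$ and $j$ to be even, so the two progressions lie in the same parity class and the condition $|\mathcal{R}^k_{a,a}\cap \mathcal{R}^k_{i,j}|\ge 2$ is equivalent to the single compact inequality $\min(L_1,L_2)\ge |i-j|+2$, which unpacks into four elementary inequalities on $a,i,j,k$.

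The final step is to check those four inequalities from the hypotheses, and this is where the exclusion $i,j\notin \mathcal{J}_2=\{0,k\}$ is used. Combined with the parity of $i,j$, the conditions $i,j\neq 0$ give $\min(i,j)\ge 2$, while $i,j\neq k$ give $\max(i,j)\le k-1$; coupled with the automatic bound $\max(i,j)\le L_1$ that follows from $i,j\in \mathcal{R}^k_{a,a}$, these are precisely enough to bound both $L_1-|i-j|$ and $L_2-|i-j|$ from below by $2$. I expect no hidden difficulty beyond this bookkeeping, and as a by-product one sees that the hypothesis $a\notin \mathcal{J}_2$ in the statement is actually automatic: if $a\in\{0,k\}$ then $\mathcal{R}^k_{a,a}=\{0\}$, leaving no room for non-simple-current $i,j$ at all.
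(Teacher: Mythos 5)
Your proposal is correct and follows essentially the same route as the paper: counting common summands of $(a)\times(a)$ and $(i)\times(j)$, with your compact inequality $\min(L_1,L_2)\ge|i-j|+2$ being exactly the paper's exhibition of the two common representations $(|i-j|)$ and $(|i-j|+2)$, verified from the same three facts ($\min(i,j)\ge 2$, $\max(i,j)\le k-1$, and $\max(i,j)\le\min(2a,2k-2a)$). Your closing remark that the hypothesis $a\notin\mathcal{J}_2$ is automatic matches the paper's footnote.
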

\begin{proof}
Since the fusion product for  $\widehat{\mathfrak{su}}(2)_k$ representations is symmetric, we will assume without loss of generality that $i\geq j$. 
%Under the conditions of the theorem we have 
%    \begin{equation}
 %        \operatorname{dim}\operatorname{Hom}\left((a)\times(a)\times(i)\times(j),(0)\right)\neq 0\,.
  %  \end{equation}
%    Indeed, since every representation is self-conjugate we can rewrite condition \eqref{con} as $a\in \mathcal{R}^k_{i,a}$ and $a\in \mathcal{R}^k_{j,a}$. Then, consider the following two factors in the 4-product of interest
   % \begin{equation}\label{e1}
      %  (i)\times(a)=(i-a)\oplus\cdots\oplus(a)\oplus\cdots\oplus\operatorname{min}(i+a,2k-i-a)
   % \end{equation}
   % and
    %\begin{equation}\label{e2}
     %    (j)\times(a)=(j-a)\oplus\cdots\oplus(a)\oplus\cdots\oplus\operatorname{min}(j+a,2k-j-a).
    %\end{equation}
   % We see that there is a common representation $(a)$ in the r.h.s. of \eqref{e1} and \eqref{e2} which will lead to a copy of $(0)$ in the 4-product $(i)\times(a)\times(j)\times(a)$ coming from the term $(a)\times(a)$.
% Therefore we will prove the equivalent statement 
%\begin{equation}
%     \operatorname{dim}\operatorname{Hom}\left((a)\times(a)\times(i)\times(j),(0)\right) =1 \Longrightarrow \quad \text{at least one of }\, i,j\in \mathcal{J}_2\,.
%\end{equation}
By assumption we have  $0<i,j<k$ 
% arbitrary satisfying \eqref{con} 
and $a\notin\{0,k\}$ such that   %and consider the following two factors in the 4-product of interest
\begin{equation}\label{m}
        (a)\times(a)=(0)\oplus (2)\oplus\cdots(j)\oplus\cdots\oplus (i)\oplus\cdots \oplus \left(\operatorname{min}(2a,2k-2a)\right)
    \end{equation}
    where at least two representations must be present in the direct sum, 
    and
    \begin{equation}\label{ij}
        (i)\times(j)= (i-j)\oplus (i-j+2)\oplus\cdots \oplus \left(\operatorname{min}(i+j,2k-i-j)\right)
    \end{equation}
    The first expansion implies that  $i$ and $j$ are both even. Since $i\ge j$ and $j\ge 2$ the representations $(i-j)$ and $(i-j+2)$ must each appear in the right hand side of (\ref{m}). 
    While $(i-j)$ is also manifestly  present in the second expansion (\ref{ij}) to show that $(i-j+2)$ appears in (\ref{ij}) we need to check that under the assumptions at hand $i-j+2 \le \operatorname{min}(i+j,2k-i-j)$. Since $j\ge 2$ it follows that $i-j+2 \le i + j$. Furthermore, $i-j+2 \le 2k-i-j$ is equivalent to $i-j\le k-1$ that holds 
    due to the assumption $0<i,j<k$.   Hence, both representations $(i-j)$ and $(i-j+2)$ are present in expansion (\ref{m}) and in expansion (\ref{ij}) that implies 
    \be
     \operatorname{dim}\operatorname{Hom}\left((a)\times(a)\times(i)\times(j),(0)\right) >1 \, .
    \ee

\end{proof}

To summarise, we found that special triples of representations satisfying \eqref{propertysu2} can occur only when the simple current appears in the fusion of the boundary condition label $a$ with itself. From the fusion rules \eqref{su2fus} we notice that this is only possible if $k$ is even $a=k/2$ in which case  $(a)$ is invariant under fusion with the simple current $(k)$. The remaining labels $i,j$ are such that at least one of them is a simple current while the other is any weight in $\mathcal{R}^{2a}_{a,a}$. 
In the next section we prove a similar result for $\mathrm{SU(3)}$ employing essentially the same strategy.\\

    We can use this $\mathrm{SU}(2)$ result to obtain a classification of special triples for minimal models. Consider the $\mathcal{M}_{p,p^\prime}$ minimal model  with Kac table
\begin{equation}
        \mathcal{I}_{p,p^\prime}:=\{(r,s)\,|\, 1\leq r\leq p^\prime-1,\; 1\leq s\leq p-1\} 
    \end{equation}
    where $(r,s)$ denotes a highest-weight Virasoro irreducible representation. Since in this case all representations are self-conjugate and the multiplicities are at most 1, the classification problem reduces to finding triples $(r,s),(r_1,s_1),(r_2,s_2)\in \mathcal{I}_{p,p^\prime}$ of representations such that
    \begin{equation}\label{mini1}
       (r,s)\otimes(r,s) \supset (r_1,s_1) \quad \text{and}\quad (r,s)\otimes (r,s)\supset (r_2,s_2)
    \end{equation}
    and
    \begin{equation}\label{mini2}
        \operatorname{dim}\operatorname{Hom}\left((r_1,s_1)\otimes(r_2,s_2)\otimes(r,s)\otimes(r,s),(1,1)\right)=1\,.
    \end{equation}
     The fusion rule of two irreducible representations factorises into two $\mathrm{SU}(2)$ type fusion rules:
     \begin{equation}\label{minimalfus}
        (r_1,s_1)\otimes(r_2,s_2)=\sum_{r_3=|r_1-r_2|+1}^{\operatorname{min}(r_1+r_2,2p^\prime-r_1-r_2)-1}\,\sum_{s_3=|s_1-s_2|+1}^{\operatorname{min}(s_1+s_2,2p-s_1-s_2)-1}(r_3,s_3)
    \end{equation}
    where the sums increase in steps of 2. Hence, for our classification problem \eqref{mini1}, \eqref{mini2} we can consider the two $\mathrm{SU}(2)$-type triples $(r,r_1,r_2)$ and $(s,s_1,s_2)$ separately. Then minimal model triples $(r,s),(r_1,s_1),(r_2,s_2)$ that satisfy \eqref{mini1} and \eqref{mini2} are such that both the $\mathrm{SU}(2)$ triples $(r,r_1,r_2)$ and $(s,s_1,s_2)$ satisfy the corrersponding $\mathrm{SU}(2)$ property for special triples described in Theorem \ref{su2thm}. Considering all the possible combinations of labels, this leads to new types of special triples not present in the $\mathrm{SU}(2)$ case.  More specifically, we find two types of solutions for triples   $(r,s),(r_1,s_1),(r_2,s_2)$ that satisfy \eqref{mini1} and \eqref{mini2}. The first is of the same form as the $\mathrm{SU}(2)$ result, namely at least one of $(r_1,s_1),(r_2,s_2)$ must be a simple current. If the simple current is $(1,p-1)$ then we have the restriction on the boundary condition label $(r,s)=(r,p/2)$ and $p$ must be even. For the other simple current $(p^\prime -1,1)$ the boundary label must be of the form $(p^\prime/2,s)$ with $p^\prime$ even. For the second type of solution, consider the following subrings of the fusion ring
     \begin{equation}
        R_1:=\{(r,1)\,|\, 1\leq r\leq p^\prime-1\} \quad \text{and}\quad R_2:=\{(1,s)\,|\, 1\leq s\leq p-1\}\,.
    \end{equation}
    Then triples of the second type have arbitrary $(r,s)$, and $\left((r_1,s_1),(r_2,s_2)\right)\in R_1\times R_2$. The symmetric case  $\left((r_1,s_1),(r_2,s_2)\right)\in R_2\times R_1$ is also a solution.

\section{Special triples for $\mathrm{SU(3)}$ WZW theory}\label{sec4}

In this section we consider the special triples  problem  for the $\mathrm{SU(3)}$ WZW models. 
We use the explicit Begin–-Mathieu--Walton formula \cite{BMW} (see also \cite{Barker}) to which we refer as the BMW formula. 
We first use it to describe explicitly the fusion of a representation with its conjugate where we separate all multiplicity 1 representations. 
Geometrically those reside on the sides of a polygon on a plane while the higher multiplicity representations fill in the bulk of the polygon. 
We then adapt a strategy similar to the one we employed for $\mathrm{SU(2)}$. Namely, we show that for any two representations $\hat \mu$, $\hat \nu$
that appear in the fusion $\hat \lambda \times \overline{\hat \lambda}$ with multiplicity 1 and that are not simple currents, the fusions $\hat \lambda \times \hat \mu$ and $\overline{ \hat \lambda} \times \hat \nu$ always have at least two common irreducible representations.   This shows that all special triples come from simple currents. The analysis is more involved as, unlike  the 
$\mathrm{SU(2)}$ case where the geometry is one-dimensional, we have to deal with two-dimensional polygons and the common representations depend on the level $k$ and on the boundary segments.

\subsection{The BMW formula }

Let $P^k_+$ be the set of $\aff$ dominant integral weights at level $k$ and let $\hat{\lambda},\hat{\mu},\hat{\nu}\in P^k_+$ be the highest weights  of three integrable $\widehat{\mathfrak{su}}(3)_k$ representations $V_{\hat{\lambda}},V_{\hat{\mu}},V_{\hat{\nu}}$. The corresponding  $\mathfrak{su}(3)$ highest weights are $\lambda=(\lambda_1,\lambda_2),\,\mu=(\mu_1,\mu_2),\,\nu=(\nu_1,\nu_2)$. The  $\aff$ multiplicity of the representation $\hat{\nu}$ in the fusion product $\hat{\lambda} \times \hat{\mu}$ according to the BMW formula  \cite{BMW}
is
\begin{equation}
   N_{\lambda, \mu}^{(k) \nu}= \begin{cases}\min \left(k_0^{\max }, k\right)-k_0^{\min }+1 & \text { if } k \geq k_0^{\min } \text { and } N_{\lambda, \mu}^\nu>0, \\ 0 & \text { if } k<k_0^{\min } \text { or } N_{\lambda, \mu}^\nu=0,\end{cases}
\end{equation}
where
\begin{align} \label{BMWkk}
& k_0^{\min }=\max \left(\lambda_1+\lambda_2, \mu_1+\mu_2, \nu_1+\nu_2,\mathcal{A}-\lambda_1,\mathcal{A}-\mu_1,A-\nu_2,\mathcal{B}-\lambda_2,\mathcal{B}-\mu_2,\mathcal{B}-\nu_1\right)\, ,  \nonumber \\
& k_0^{\max }=\min (\mathcal{A}, \mathcal{B}) \, , \nonumber \\
& \mathcal{A}=\frac{1}{3}\left[2\left(\lambda_1+\mu_1+\nu_2\right)+\lambda_2+\mu_2+\nu_1\right]\, , \nonumber \\
& \mathcal{B}=\frac{1}{3}\left[\lambda_1+\mu_1+\nu_2+2\left(\lambda_2+\mu_2+\nu_1\right)\right]
\end{align}
and $N_{\lambda,\mu}^{\nu}$ are the $\mathfrak{su}(3)$ tensor product multiplicities given by
\begin{equation}
    N_{\lambda,\mu}^{\nu}=\left(k_0^{\text{max}}-k_0^{\text{min}}+1\right)\delta
\end{equation}
with
\begin{equation}
    \delta= \begin{cases}1 & \text { if } \quad k_0^{\text {max }} \geq k_0^{\text {min }} \quad \text { and } \quad \mathcal{A}, \mathcal{B} \in \mathbb{Z}_{+} \\
0 & \text { otherwise }\end{cases}
\end{equation}

\subsection{Direct sum decomposition of  \texorpdfstring{$(a,b)\times(b,a)$}{(a,b)x(b,a)} } \label{sec4.2}

We are interested in the $\aff$ fusion product of a representation with its conjugate. We shall use the BMW formula with $\lambda=(a,b)$ and $\mu=\overline{\lambda}=(b,a)$. By the symmetry of the fusion product we  restrict ourselves without loss of generality to the case $a\geq b$. Using the BMW formula  we obtain
\begin{equation} \label{abba}
    (a,b)\times (b,a) =\bigoplus_{i=1}^5\bigoplus_{R\in T_i} N^{(k)}_{(a,b),(b,a)}(p,l) R 
\end{equation}
where the multiplicities $N^{(k)}_{(a,b),(b,a)}$ are given by formula  \eqref{offdiagmult} and the representations $R$  are taken from five sets   $T_i$ 
specified in \eqref{offdiag2}, \eqref{offdiag3}, \eqref{offdiag4}, \eqref{diag2} and \eqref{diag3}.

We now explain the details of formula (\ref{abba}). The representations $\nu$ that appear in $(a,b)\times (b,a)$ can be split into those who have equal Dynkin labels which we call diagonal and the remaining ones which we call off-diagonal. %\\
%\noindent \underline{Off-diagonal}\\
We first consider the off-diagonal representations $(\nu_1,\nu_2)$, $\nu_1\ne \nu_2$.
% which appear in the $\mathfrak{su}(3)$ tensor product $(a,b)\otimes (b,a)$.   
According to the BMW formula, they should  satisfy $k_0^{\text{max}}\geq k_0^{\text{min}}$ which singles out the representations which appear in the $\mathfrak{su}(3)$ tensor product. Solving this condition  we find that $(\nu_1,\nu_2)$ must be of the form
\begin{equation}\label{offdiag1}
    (p-2l,p+l) \quad \text{or} \quad (p+l,p-2l) \quad \text{with}\quad  2\leq p\leq a+b,\;1\leq l\leq \operatorname{min}\left(\floor*{p/2},b\right)
\end{equation}
where also $p,l\in \mathbb{N}$. For these representations we have
\begin{equation}
    k_0^{\text{max}}=a+b+p-l
\end{equation}
and
\begin{equation}
    k_0^{\text{min}}=a+b-l+\operatorname{max}(p-b,l)+\operatorname{max}(p-a,l)
\end{equation}
To find the representations from \eqref{offdiag1} that appear in the $\aff$ fusion $(a,b)\times (b,a)$ we have to further solve the inequality
\begin{equation}
    k \geq k_0^{\text{min}}\,.
\end{equation}
This gives the following three non-overlapping sets:
\begin{align}\label{offdiag2}
  & \text{if}\quad  b\geq 2 \quad \text{and}\quad  k\geq 2a+3  \nonumber \\
   &T_1= \bigg\{(p-2l,p+l),(p+l,p-2l)\,|\, &a+2\leq p\leq \operatorname{min}\left(a+b,k-a-1,\floor*{\frac{2k}{3}},\floor*{\frac{b+k}{2}}\right), \nonumber \\
    & &\operatorname{max}(1,2p-k)\leq  l\leq \operatorname{min}\left(p-a-1,\floor*{\frac{p}{2}},b\right)\bigg\}\, ,
\end{align}
    \begin{align}\label{offdiag3}
   & \text{if} \quad b\geq 1 \quad \text{and} \quad k\geq a+b+1 \nonumber \\
     &  T_2=\bigg\{(p-2l,p+l),(p+l,p-2l)\,|\, &&b+1\leq p\leq \operatorname{min}\left(a+b,k-a\right), \nonumber \\
        & &&\operatorname{max}(1,p-a)\leq  l\leq \operatorname{min}\left(p-b,\floor*{\frac{p}{2}},b\right)\bigg\}\, ,
    \end{align}
     \begin{align}\label{offdiag4}
    & \text{if} \quad b\geq 2 \quad \text{and} \quad k\geq a+b+1 \nonumber\\
     &  T_3=\bigg\{(p-2l,p+l),(p+l,p-2l)\,|\, && 2\leq  p\leq \operatorname{min}\left(k-a-1,2b-2\right), \nonumber \\
         & && \operatorname{max}(1,p-b+1)\leq  l\leq \operatorname{min}\left(k-a-b,\floor*{\frac{p}{2}},b\right) \bigg\}\, .
    \end{align}
    If the given condition on $k$, $a$ and $b$ is not satisfied the corresponding set is empty.
    The above sets are all empty for  $b=0$ as in this case there are no off-diagonal representations appearing in the fusion $(a,0)\times (0,a)$. 
    
It follows from the BMW formula that the off-diagonal representations in the sets $T_{i}$  appear with multiplicity
\begin{equation}\label{offdiagmult}
    N^{(k)}_{(a,b),(b,a)}(p,l)=\operatorname{min}(k,a+b+p-l)-a-b+l-\operatorname{max}(p-b,l)-\operatorname{max}(p-a,l)+1\, .
\end{equation}
This expression satisfies the identities 
\begin{equation}
N^{(k)(p-2l,p+l)}_{(a,b),(b,a)}=N^{(k)(p+l,p-2l)}_{(a,b),(b,a)}= N^{(k)}_{(a,b),(b,a)}(p,l)
\end{equation}
that imply that  given  a representation appears in $(a,b)\times (b,a)$  its conjugate also appears with the same multiplicity.

%\noindent \underline{Diagonal}\\
We parametrise an arbitrary diagonal representation as $(\nu_1,\nu_2)=(p,p)$ with $p\in \mathbb{N}$. To find which of these appear in the $\aff$ fusion $(a,b)\times (b,a)$ we have to solve
\begin{equation}\label{diag1}
    \operatorname{min}(k_0^{\text{max}},k) \geq k_0^{\text{min}}\,.
\end{equation}
where 
%Using (\ref{BMWkk}) we find
% For the above specified values of $(\lambda_1,\lambda_2),(\mu_1,\mu_2)$ and $(\nu_1,\nu_2)$ we find
\begin{equation}
    k_0^{\text{max}}=a+b+p \qquad \text{and}\qquad k_0^{\text{min}}=a+b+p-\operatorname{min}(p,b,a+b-p)\, .
\end{equation}
 Solving  \eqref{diag1} for the admissible values of $p$ we find two disjoint sets
    \begin{equation}\label{diag2}
       T_4=\{(p,p)\,|\, 0\leq p\leq \operatorname{min}(a,k-a)\},
    \end{equation}
    \begin{equation}\label{diag3}
       \text{for}\quad k\geq 2a+2:\quad T_5= \bigg\{(p,p)\,|\, a+1\leq p\leq \operatorname{min}\left(a+b,\floor*{\frac{k}{2}}\right)\bigg\}\,
    \end{equation}
    and the sets are empty if the condition on $k$ is not satisfied. 
The representations $(p,p)\in T_4\cup T_5$ appear in $(a,b)\times (b,a)$ with multiplicity
\begin{equation}\label{diagmult}
    N_{(a,b),(b,a)}^{(k)}(p,0)=\operatorname{min}(k,a+b+p)-a-b-\operatorname{max}(p-b,0)-\operatorname{max}(p-a,0)+1
\end{equation}
which is obtained from \eqref{offdiagmult}  by setting $l=0$. 
Note that all sets $T_{i}$, $1\le i\le 5$ are empty if $k<a+b$ which is the standard unitarity bound for $\mathrm{SU(3)}$.

\mycomment{
We can now  express the set of all representations appearing in the direct sum decomposition of $(a,b)\times (b,a)$ as
\begin{equation}
    \mathcal{R}^k_{(a,b),(b,a)}= \bigcup_{i=1}^5 T_i\,.
\end{equation} 
}

\subsection{Multiplicity 1 representations in  \texorpdfstring{$(a,b)\times(b,a)$}{(a,b)x(b,a)} }\label{sec4.3}

To find the representations with multiplicity $1$ in the fusion $(a,b)\times (b,a)$ we solve $N^{(k)}_{(a,b),(b,a)}(p,l)=1$ and $N^{(k)}_{(a,b),(b,a)}(p,0)=1$ in each set \eqref{offdiag2}, \eqref{offdiag3},\eqref{offdiag4} and \eqref{diag2}, \eqref{diag3} respectively. 
The solution can be described as the following   union of pairwise disjoint sets 
\begin{equation}\label{fullmult1}
    \mathcal{R}^{(1);k}_{(a,b),(b,a)}= \left(\bigcup_{i=1}^3 \left( Q_i\cup \overline{Q}_i\cup C_i\cup \overline{C}_i\right) \right)\cup \left(\bigcup_{i=4}^6 Q_i\right)\cup \left(\bigcup_{i=4}^6 C_i\right)
\end{equation}
where
\begin{align}\label{d1}
        &\text{for}\quad b\geq 2, \quad 2a+3\leq k\leq 2a+2b-1 \nonumber \\
        &Q_1=\bigg\{(2k-3p,3p-k)\,|\, \ceil*{\frac{k+1}{2}}\leq p\leq \operatorname{min}\left(a+b,k-a-1,\floor*{\frac{2k}{3}},\floor*{\frac{b+k}{2}}\right)\bigg\}\,,
    \end{align}
    \begin{align}\label{d2}
        &\text{for}\quad  b\geq 1, \quad a+b+1\leq k\leq 2a+b \nonumber \\
        &Q_2=\bigg\{(k-a-2l,k-a+l)\,|\, \operatorname{max}(1,k-2a)\leq l\leq \operatorname{min}\left(k-a-b,\floor*{\frac{k-a}{2}},b\right)\bigg\}\,,
    \end{align}
 \begin{align}\label{d3}
        &\text{for}\quad b\geq 2, \quad a+b+1\leq k\leq a+2b-1 \nonumber \\
        &Q_3=\{(p-2k+2a+2b,p+k-a-b)\,|\, 2k-2a-2b\leq p\leq k-a-1\}\,,
    \end{align}
     \begin{align}\label{c1}
        &\text{for}\quad b\geq 2, \quad  k\geq 2a+b+2 \nonumber \\
        &C_1=\{(a+b-2l,a+b+l)\,|\, \operatorname{max}(1,2a+2b-k+1)\leq l \leq b-1\}\,,
    \end{align}
     \begin{align}\label{c2}
        &\text{for}\quad b\geq 1, \quad k\geq a+2b+1 \nonumber \\
        &C_2=\{(p-2b,p+b)\,|\, 2b\leq p\leq \operatorname{min}(a+b,k-a-1)\}\,,
    \end{align}
     \begin{align}\label{c3}
        &\text{for}\quad b\geq 2, \quad  k\geq a+b+2 \nonumber \\
        &C_3=\{(0,3l)\,|\, 1\leq l\leq \operatorname{min}\left(k-a-b-1,b-1\right)\}\,,
    \end{align}
     \begin{align}\label{d4}
        &\text{for}\quad k=a+b \nonumber \\
        &Q_4=\{(p,p)\,|\, 1\leq p\leq b\}\,,
    \end{align}
     \begin{align}\label{d5}
        &\text{for}\quad a\neq b,\quad  a+b+1 \leq k\leq 2a \nonumber \\
        &Q_5=\{(k-a,k-a)\}\,,
    \end{align}
     \mycomment{
     \begin{align}\label{d6}
        &\text{for}\quad k=2a \nonumber \\
        &Q_6=\{(a,a)\}\,,
    \end{align}
    }
    \begin{align}\label{d7}
        &\text{for}\quad b\geq 1,\quad  2a+2 \leq k\leq 2a+2b \quad \text{and} \quad k \quad \text{even} \nonumber \\
        &Q_6=\bigg\lbrace\left(\frac{k}{2},\frac{k}{2}\right)\bigg \rbrace\,,
    \end{align}
    \begin{align}\label{c4}
        & \text{for}\quad  k\geq a+b\nonumber \\
        &C_4=\{(0,0)\}\,,
    \end{align}
      \begin{align} \label{c5}
        &\text{for}\quad a\geq 2,\, b=0,\quad  k\geq a+1 \nonumber \\
        &C_5=\{(p,p)\,|\, 1\leq p\leq \operatorname{min}(a-1,k-a)\}\,,
    \end{align}
     \begin{align}\label{c6}
        & \text{for}\quad  k\geq 2a+2b+1 \nonumber \\
        &C_6=\{(a+b,a+b)\}
    \end{align}
Each of  the above sets is set to be  empty if the stipulated conditions involving $a,b,k$ are not satisfied. For each off-diagonal set $Q_i,C_i$ with $i\in\{1,2,3\}$ its conjugate $\overline{Q}_i,\overline{C}_i$ also appears with multiplicity $1$. The sets were intuitively named $C_i$ from ``classical'' and $Q_i$ from ``quantum'' as the former contain weights that satisfy $k\geq \kmax$ while the latter have weights with $ k<\kmax$. For large enough $k$, namely for $k\geq 2a+2b$, all the multiplicity $1$ representations in $(a,b)\times (b,a)$ will be exactly the multiplicity $1$ representations in the $\mathfrak{su}(3)$ tensor product $(a,b)\otimes (b,a)$. Geometrically, we can interpret each element in the sets $Q_i,\overline{Q}_i,C_i,\overline{C}_i$ for fixed level as a point in an $\mathfrak{su}(3)$ highest weight lattice. Then the union of all these sets of multiplicity $1$ representations forms the boundary of an $n$-sided polygon in that lattice. Depending on the level and the values of $a$ and $b$ 
we get $3\le n \le 7$. 
We showcase this in Figures \ref{graph1}, \ref{graph2} and \ref{graph3} where the red coloured dots are representations which appear in the direct sum decomposition of $(35,25)\times (25,35)$ with multiplicity $1$, while the blue dots have multiplicity greater than $1$. On those figures we also mark the multiplicity 1 regions $Q_i$ and $C_i$ and their conjugates.  

\begin{figure}[H]
\centering
\includegraphics[width=12cm]{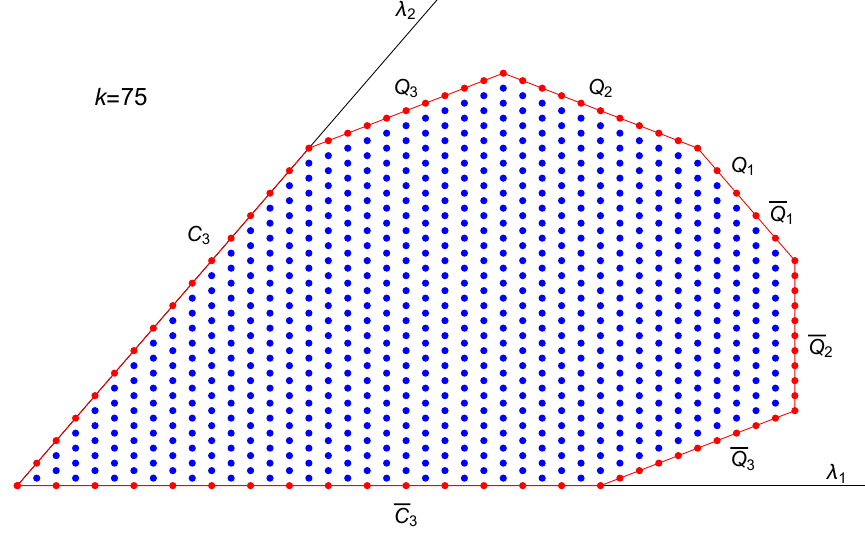}
\caption{Highest weights $(\lambda_1,\lambda_2)$ of representations that appear in the $\widehat{\mathfrak{su}}(3)_k$ fusion $(35,25)\times(25,35)$ at $k=75$.}\label{graph1}
\end{figure}

\begin{figure}[H]
\centering
\includegraphics[width=12cm]{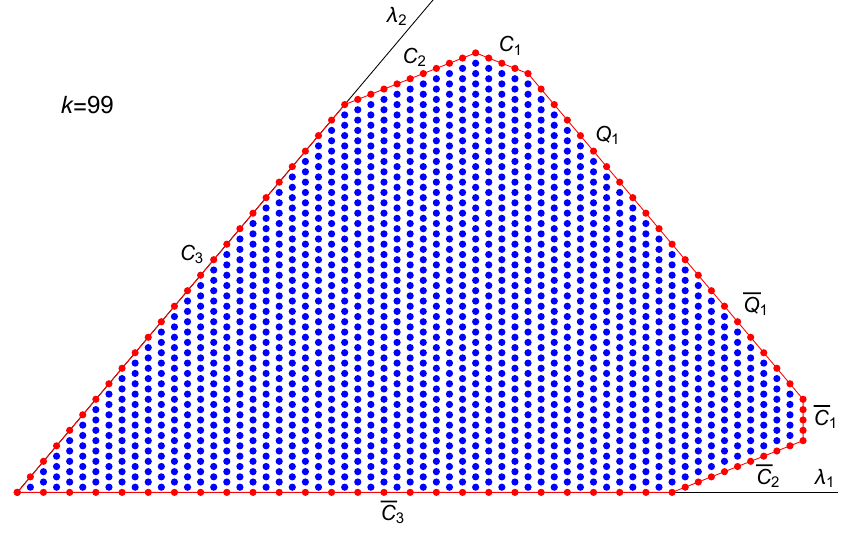}
\caption{Highest weights $(\lambda_1,\lambda_2)$ of representations that appear in the $\widehat{\mathfrak{su}}(3)_k$ fusion $(35,25)\times(25,35)$ at $k=99$.}\label{graph2}
\end{figure}

\begin{figure}[H]
\centering
\includegraphics[width=12cm]{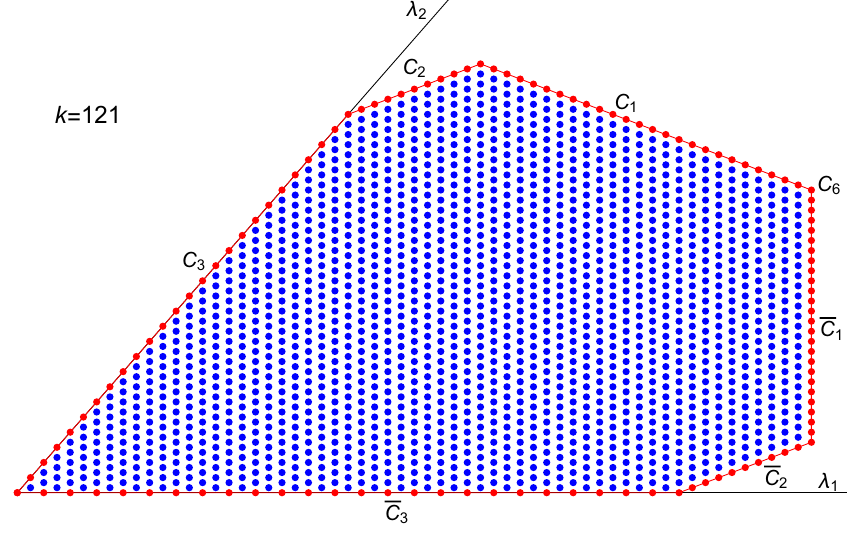}
\caption{Highest weights $(\lambda_1,\lambda_2)$ of representations that appear in the $\widehat{\mathfrak{su}}(3)_k$ fusion $(35,25)\times(25,35)$ at $k=121$.}\label{graph3}
\end{figure}

%\subsection{Simple currents in \texorpdfstring{$(a,b)\times(b,a)$}{(a,b)x(b,a)} fusion }

\subsection{Special triples exist only when a simple current is present} \label{sec4.4}

While the trivial representation that corresponds to the identity simple current always appears in the fusion $(a,b)\times (b,a)$, the non-trivial 
simple currents $\{(k,0),(0,k)\}$ only appear in a special case described in the following lemma which is straightforward to establish using the BMW formula and 
the description the sets $T_{i}$. 
% Recall that all the off-diagonal representations that appear in the fusion $(a,b)\times (b,a)$ are described by the  sets $T_1,T_2,T_3$ given in \eqref{offdiag2}, \eqref{offdiag3} and \eqref{offdiag4} respectively. The following result is useful in order to identify when the simple currents can appear in $\mathcal{R}^k_{(a,b),(b,a)}$.
\begin{lemma}\label{lemma2}
    Let $T_1,T_2,T_3 \subseteq \mathcal{R}_{(a,b),(b,a)}^k$ be the off-diagonal representations that appear in the direct sum decomposition of the  $\aff$ fusion $(a,b)\times (b,a)$. Then 
   \begin{enumerate}[i),wide, labelwidth=!, labelindent=0pt]\item \begin{equation}
        \{(k,0),(0,k)\}  \nsubseteq T_1\cup T_3
    \end{equation}
    and \item
    \begin{equation}
          \{(k,0),(0,k)\}  \subseteq T_2 \Longleftrightarrow a=b \quad \text{and}\quad k=3a \, .
    \end{equation}
    \end{enumerate}
\end{lemma}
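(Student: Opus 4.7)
The plan is to determine, from the BMW parametrization, which values of $(p,l)$ can possibly yield the non-trivial simple currents $(k,0)$ and $(0,k)$, and then to test those candidates against the admissibility conditions of each $T_i$. Since the simple currents are off-diagonal whenever $k>0$, they can appear in $(a,b)\times(b,a)$ only as elements of $T_1\cup T_2\cup T_3$, so the case analysis is exhaustive.

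Every element of $T_i$ has one of the forms $(p-2l,p+l)$ or $(p+l,p-2l)$ with $p,l\ge 1$. Matching $(k,0)=(p-2l,p+l)$ or $(0,k)=(p+l,p-2l)$ forces $p+l=0$ and is immediately discarded. The remaining matchings $(k,0)=(p+l,p-2l)$ and $(0,k)=(p-2l,p+l)$ each give $p=2l$ together with $p+l=k$, so there is a unique candidate
\begin{equation*}
(p,l)=\left(\tfrac{2k}{3},\tfrac{k}{3}\right),
\end{equation*}
with $3\mid k$ as a necessary condition. The same $(p,l)$ produces both simple currents, so the two always enter or leave a given $T_i$ together.

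For part (i), I would substitute this candidate into the defining inequalities of $T_1$ and $T_3$. In $T_1$ the bounds $l\le b$ and $l\le p-a-1$ reduce to $k\le 3b$ and $k\ge 3(a+1)$, which force $a+1\le b$ and contradict the standing assumption $a\ge b$. In $T_3$ the bounds $p\le 2b-2$ and $p\le k-a-1$ likewise yield $3(a+1)\le k\le 3b-3$, again incompatible with $a\ge b$. Hence neither simple current belongs to $T_1\cup T_3$, which is actually slightly stronger than the claim in (i).

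For part (ii), the same substitution into the bounds of $T_2$ is the heart of the argument. The inequalities $l\le b$ and $l\le p-b$ translate to $k\le 3b$ and $k\ge 3b$ respectively, forcing $k=3b$; then $p\le k-a$ gives $a\le b$, which combined with $a\ge b$ produces $a=b$ and $k=3a$. The converse direction is a direct verification that at $a=b$ and $k=3a$ the point $(p,l)=(2a,a)$ meets all of the defining inequalities of $T_2$, so both simple currents belong to it. The entire argument is mechanical once the unique candidate $(p,l)=(2k/3,k/3)$ is identified; the only step requiring care is picking out, among the many inequalities in each $T_i$, the handful that are binding for this specific $(p,l)$.
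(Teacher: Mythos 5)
Your proof is correct and follows essentially the same route as the paper: both identify the unique candidate $(p,l)=(2k/3,k/3)$ from the off-diagonal parametrization and then test it against the defining inequalities of $T_1$, $T_2$, $T_3$ under the standing assumption $a\geq b$. The only cosmetic difference is that you extract the contradictions from the $l$-bounds where the paper uses the $p$-bounds, which is immaterial.
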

Notice that the simple currents are in the sets of multiplicity $1$ representations $Q_2$ and $\overline{Q}_2$ given by \eqref{d2}.

\mycomment{
It will be convenient to isolate the simple currents. To achieve this we split $D_2$ according to the level $k$ as
 \begin{align}\label{d2low}
        &b\geq 1, \quad a+b+1\leq k\leq a+2b-1 \nonumber \\
        &D_2^{\text{low}}=\bigg\{(k-a-2l,k-a+l)\,|\, \operatorname{max}(1,k-2a)\leq l\leq k-a-b\bigg\}
    \end{align}
    and
     \begin{align}\label{d2high}
        &b\geq 1, \quad a+2b\leq k\leq 2a+b \nonumber \\
        &D_2^{\text{high}}=\bigg\{(k-a-2l,k-a+l)\,|\, \operatorname{max}(1,k-2a)\leq l\leq b\bigg\}\,.
    \end{align}
In this way we see that whenever the simple currents $(0,k)$ and $(k,0)$ appear in the fusion $(a,b)\times (b,a)$ they will be the only elements of the sets $D_2^{\text{high}}$ and $\overline{D}_2^{\text{high}}$ respectively.
}

For future reference let us note that the fusion of a simple current with an arbitrary reprsentation is given by
\begin{equation}\label{currentaction}
    (k,0)\times (i,j) =(k-i-j,i) \qquad \text{and} \qquad (0,k)\times (i,j) = (j,k-i-j)\,.
\end{equation}
Finally we introduce the following notation for the set of $\widehat{\mathfrak{su}}(3)_k$ simple currents: 
\begin{equation}
    \mathcal{J}_3=\{J_0,J_1,J_2\} \qquad\text{where} \quad J_0=(0,0),\; J_1=(k,0),\; J_2=(0,k)\,.
\end{equation}

\mycomment{
as well as the fusion action of the simple currents on a set containing $\aff$ representations $\mathcal{S}=\{s_1,s_2,\ldots,s_l\}$:
\begin{equation}
    \mathcal{J}\cdot \mathcal{S}:= \{J_m\times s_n\,|\, m=0,1,2,\; n=1,2,\ldots,l\}
\end{equation}
}

We are now ready to present the main result of this section that establishes that all special triples in $\mathrm{SU(3)}$ WZW theory come from simple currents. 
\begin{thm}\label{thmsu3}
    Let $(a,b)$ be an integrable $\aff$ representation that is not a simple current\footnote{If $(a,b)$ is any simple current then we have a trivial situation since $\mathcal{R}^{(1);k}_{(a,b),(b,a)}=\{(0,0)\}$}  and let $(i,j),(m,n)\in \mathcal{R}^{(1);k}_{(a,b),(b,a)}$. Then if $(i,j),(m,n)\notin \mathcal{J}_3$ the following holds
    \begin{equation}
        \operatorname{dim}\operatorname{Hom}\left((a,b)\times (b,a)\times (i,j)\times (m,n),(0,0)\right) > 1
    \end{equation}
\end{thm}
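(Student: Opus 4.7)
My plan is to adapt the SU(2) strategy to the richer SU(3) geometry. First I would apply Frobenius reciprocity to rewrite
\[
\operatorname{dim}\operatorname{Hom}\bigl((a,b)\times(b,a)\times(i,j)\times(m,n),(0,0)\bigr)=\operatorname{dim}\operatorname{Hom}\bigl((a,b)\times(i,j),(a,b)\times(n,m)\bigr),
\]
the right-hand side counting common irreducible constituents, with multiplicities, of the two fusion products. A second application of Frobenius reciprocity, together with the hypothesis $(i,j),(m,n)\in\mathcal{R}^{(1);k}_{(a,b),(b,a)}$, shows that $(a,b)$ itself appears with multiplicity one in each of $(a,b)\times(i,j)$ and $(a,b)\times(n,m)$, so $(a,b)$ is always one common constituent. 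The theorem thus reduces to exhibiting one additional shared irreducible whenever $(i,j),(m,n)\notin\mathcal{J}_3$.

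Second, I would use Lemma \ref{lemma2} to localize the role of simple currents: the only non-trivial simple currents that can appear in $\mathcal{R}^{(1);k}_{(a,b),(b,a)}$ are isolated points of $Q_2\cup\overline{Q}_2$ arising in the degenerate case $a=b$, $k=3a$. Once they are excised, every admissible $(i,j)$ belongs to a definite side $Q_s$, $\overline{Q}_s$, $C_s$ or $\overline{C}_s$ of the polygon of multiplicity-one weights described in Section \ref{sec4.3}, and likewise for $(m,n)$. For each such pair of sides I would use the BMW formula \eqref{BMWkk} to produce a second common weight $(\alpha,\beta)\ne(a,b)$ in both $(a,b)\times(i,j)$ and $(a,b)\times(n,m)$. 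A natural candidate is $(\alpha,\beta)=(a,b)+\Delta$ with $\Delta$ a short fundamental-weight vector: since each side of the polygon is parametrized by a single integer moving in a fixed lattice direction, the BMW conditions $k_0^{\min}\le k$ for the triples $\bigl((a,b),(i,j),(\alpha,\beta)\bigr)$ and $\bigl((a,b),(n,m),(\alpha,\beta)\bigr)$ cut out overlapping intervals, forcing the existence of such a $\Delta$.

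The last and most laborious step is the case analysis itself: there are on the order of a dozen essentially distinct combinations of sides (reduced by the symmetry $(i,j)\leftrightarrow(m,n)$ and by charge conjugation), together with the boundary regimes $b=0$, $a=b$, and $k$ close to the unitarity bound $k=a+b$, the corner points where two sides of the polygon meet, and the degenerate values of $(a,b,k)$ for which some of the sets $Q_s,C_s$ become empty. I expect the main obstacle to be precisely this bookkeeping. Qualitatively, the reason the argument should succeed is that a non-simple-current fusion $(a,b)\times(i,j)$ necessarily spreads into several irreducibles whose weights lie close to $(a,b)$ on the lattice, and the spread is wide enough away from $\mathcal{J}_3$ to guarantee a shared neighbour with $(a,b)\times(n,m)$ beyond the anchor $(a,b)$.
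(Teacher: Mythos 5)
Your reduction is exactly the one the paper uses: Frobenius reciprocity, together with the closure of $\mathcal{R}^{(1);k}_{(a,b),(b,a)}$ under conjugation, shows that $(a,b)$ is a multiplicity-one common constituent of $(a,b)\times(i,j)$ and $(a,b)\times(m,n)$, so the theorem is equivalent to exhibiting a second common constituent $(c,d)\neq(a,b)$ as in \eqref{claim1}; and your candidates $(a,b)+\Delta$ with $\Delta$ a short lattice vector are precisely the paper's $w_1,\dots,w_6$ (the translates of $(a,b)$ by the six roots of $\mathfrak{su}(3)$). The appeal to Lemma \ref{lemma2} to isolate the simple-current degeneration $a=b$, $k=3a$ also matches the paper.

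However, the proposal stops where the actual proof begins. The sentence asserting that the BMW conditions ``cut out overlapping intervals, forcing the existence of such a $\Delta$'' is the entire content of the theorem, and the heuristic you offer does not establish it. In the paper this step occupies the fifteen Statements of Appendix \ref{appendix1} and Tables \ref{table1}--\ref{table3}, and the outcome is not uniform in the way your sketch suggests: for several pairs of boundary sets the working $w_l$ changes with the level $k$ and requires extra inequalities among $a,b,k$ (the red table entries); some pairs of sets never coexist for any $(a,b,k)$ and must be excluded rather than handled; the diagonal sets $Q_4,Q_5,Q_6,C_5,C_6$ need separate treatment, including degenerate single-element cases; and the ``long'' candidates $w_2=(a+2,b-1)$, $w_3=(a+1,b-2)$ are genuinely needed for some entries, so no single fixed $\Delta$ works per pair of sides. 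Until this case analysis is actually carried out --- or replaced by a genuinely uniform argument, e.g.\ the geometric one the authors hope for in Section \ref{sec_comments} --- the key existence claim is asserted rather than proved, and the argument is incomplete.
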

\begin{proof}
\mycomment{    We know from the discussion in the previous section that $(i,j),(m,n)\notin \{(k,0),(0,k)\}$ is always satisfied for $a\neq b$ since then there are no non-trivial simple currents in $\mathcal{R}^{(1);k}_{(a,b),(b,a)}$. Hence, it is  only relevant if $a=b$ and in any case can be rewritten as
    \begin{equation}\label{ijsets}
        (i,j),(m,n)\in \mathcal{R}^{(1);k}_{(a,b),(b,a)}\setminus\mathcal{J}
    \end{equation}
}
\mycomment{
\begin{equation}\label{ijsets}
     (i,j),(m,n)\in
    \left\{\begin{array}{lr}
       \mathcal{R}^{(1);k}_{(a,a),(a,a)}\setminus\left(D_2^{\text{high}}\cup \overline{D}_2^{\text{high}}\cup\{(0,0)\}\right), & \text{if } a=b\\
        \mathcal{R}^{(1);k}_{(a,b),(b,a)}\setminus\{(0,0)\}, & \text{if } a\neq b
        \end{array}\right. \,.
\end{equation}
}
By the symmetry of the fusion product we restrict without loss of generality to the case $a\geq b$. Since $(i,j),(m,n)\in \mathcal{R}^{(1);k}_{(a,b),(b,a)}\setminus\mathcal{J}_3$ we have
\begin{align}
    (a,b)\times (i,j) &= (a,b) \oplus\cdots \qquad \text{and} \nonumber \\
    (b,a)\times (m,n) &=(b,a)\oplus \cdots
\end{align}
that implies 
%\begin{align}
%    (a,b)\times (b,a)\times (i,j)\times (m,n) &= [(a,b)\times (i,j)]\times [(b,a)\times (m,n)] \nonumber \\
 %   & =[(a,b)\oplus \cdots ]\times [(b,a)\oplus \cdots]
%\end{align}
%Since $(a,b)\times (b,a)=(0,0)\oplus\cdots$ we conclude  that 
\begin{equation}
     \operatorname{dim}\operatorname{Hom}\left((a,b)\times (b,a)\times (i,j)\times (m,n),(0,0)\right) \geq 1 \, .
\end{equation}
Since the set $\mathcal{R}^{(1);k}_{(a,b),(b,a)}$ is symmetric under conjugation, to prove the theorem it suffices to prove 
that for each $(i,j)\in \mathcal{R}^{(1);k}_{(a,b),(b,a)}\setminus\mathcal{J}_3$ and each $(m,n)\in \mathcal{R}^{(1);k}_{(a,b),(b,a)}\setminus\mathcal{J}_3$
there exists at least one representation $(c,d) \ne (a,b) $
%Then to prove the theorem we have to show  $ \operatorname{dim}\operatorname{Hom}\left((a,b)\times (b,a)\times (i,j)\times (m,n),(0,0)\right) \neq 1$. It suffices to show that for all $(i,j)\in \mathcal{R}^{(1);k}_{(a,b),(b,a)}\setminus\mathcal{J}_3$    there exists at least one representation $(c,d)$
 such that
\begin{equation}\label{claim1}
(c,d)\in \mathcal{R}^k_{(a,b),(i,j)}\cap \mathcal{R}^k_{(a,b),(m,n)} \, .
\end{equation}
\mycomment{or in other words there exists $(c,d)$ such that
\begin{align}\label{claim1}
    (a,b)\times (i,j) &= (a,b) \oplus N^{(k)(c,d)}_{(a,b),(i,j)} (c,d) \oplus\cdots \qquad \text{and} \nonumber \\
    (a,b)\times (n,m) &=(a,b)\oplus N^{(k)(c,d)}_{(a,b),(n,m)} (c,d)\oplus \cdots
\end{align}
with non-zero $ N^{(k)(c,d)}_{(a,b),(i,j)}$ and $N^{(k)(c,d)}_{(a,b),(n,m)}$.} 
%In order to prove this, we need to provide a representation $(c,d)$ for each possible combination of choices of $(i,j)$ and $(m,n)$. 
Using \eqref{fullmult1}  we see that 
\begin{equation}
    (i,j),(m,n)\in \left(\bigcup_{i=1}^3 \left( D_i\cup \overline{D_i}\cup C_i\cup \overline{C_i}\right) \right)\cup \left(\bigcup_{i=4}^6 D_i\right)\cup \left(\bigcup_{i=4}^6 C_i\right) \setminus\mathcal{J}_3\,.
\end{equation}
The representations  $(c,d)$ will depend on which set $(i,j)$ and $(m,n)$ belong to. 
We find that the following six possible values for $(c,d)$ are sufficient: %are the sufficient representations of the type $(c,d)$:
\begin{align}
    & w_1=(a-1,b+2), \qquad w_2=(a+2,b-1), \qquad w_3=(a+1,b-2) \nonumber \\
    & w_4= (a-2,b+1), \qquad 
     w_5=(a+1,b+1), \qquad w_6=(a-1,b-1)\,.
\end{align}
In table \ref{table1}, table \ref{table2} and table \ref{table3} we give a representation $(c,d)\in\{w_1,\ldots,w_6\}$ that satisfies \eqref{claim1} for all possible pairs of sets that $(i,j),(m,n)$ belong to.

\begin{table}[H]
\begin{center}
\begin{tabular}{ |a|c|c|c|c|c|c|c|c|c|c|c|c| } 
 \hline
 \rowcolor{eslb}
 & $Q_1$  & $Q_2$    & $Q_3$  & $C_1$ & $C_2$ & $C_3$ & $\overline{Q}_1$ 
 &$\overline{Q}_2$  &$\overline{Q}_3$ &$\overline{C}_1$ &$\overline{C}_2$ & $\overline{C}_3$ \\
 \hline
 $Q_1$ & $w_1$  & $\color{red} w_1,w_6$  & $w_1$ & $w_1$ & $w_5$ & $w_1$ & $w_1$ & $\color{red} w_4,w_6$ & $w_4$ & $w_1$ & $w_5$ & $w_1$ \\
 \hline
 $Q_2$ & $\color{red} w_1,w_4,w_6$ & $\color{red}w_1,w_4,w_6$  & $\color{red}w_6$ & - & $\color{red} w_1$ & $\color{red}w_4,w_6$ & $\color{red}w_4,w_6$ & $\color{red}w_4,w_6$  & $\color{red}w_6$ & - & $\color{red}w_1$ & $\color{red}w_4,w_6$ \\
 \hline
 $Q_3$ & $w_3$ & $\color{red}w_6$  & $w_6$ & - & - & $w_6$ & $w_4$ & $\color{red}w_6$  & $w_6$ & - & - & $w_6$ \\
 \hline
 $C_1$ & $w_1$ & -  & - & $w_1$ & $w_5$ & $w_1$ & $w_1$ & -  & - & $w_1$ & $w_5$ & $w_1$\\
 \hline
 $C_2$ & $w_1$ &  $\color{red}w_1$ & - & $w_5$ & $w_5$ & $w_1$ & $w_1$ &  $\color{red}w_1$ & - & $w_1$ & $w_5$ & $w_1$\\
 \hline
 $C_3$ & $w_1$ & $\color{red}w_1,w_6$  & $w_6$ & $w_1$ & $w_5$ & $w_1$ & $w_1$ & $\color{red}w_4,w_6$  & $w_6$ & $w_1$ & $w_5$ & $w_1$ \\
 \hline
 $\overline{Q}_1$ & $w_1$ & $\color{red}w_1,w_6$  & $w_1$ & $w_1$ & $w_5$ & $w_1$ & $w_1$ & $ \color{red} w_1,w_4,w_6$  & $w_3$ & $w_1$ & $w_5$& $w_1$ \\
 \hline
 $\overline{Q}_2$ & $\color{red}w_1,w_6$ & $\color{red}w_1,w_6$  & $\color{red}w_6$ & - & $\color{red}w_1$& $\color{red}w_1,w_6$ & $\color{red} w_1,w_6$ & $\color{red}w_1,w_4,w_6$  &  $\color{red}w_6$ & - & $\color{red}w_1$ & $\color{red}w_1,w_6$   \\
 \hline
 $\overline{Q}_3$ & $w_1$ & $\color{red}w_6$ & $w_6$ & - & - & $w_1$ & $w_1$ & $\color{red}w_6$   & $w_6$ & - & - & $w_1$   \\
 \hline
$\overline{C}_1$ & $w_1$  & - & - & $w_1$ & $w_5$ & $w_1$ & $w_1$ & -  & - & $w_1$ & $w_5$ & $w_1$  \\
\hline
$\overline{C}_2$ & $w_5$  & $\color{red}w_1$ & - & $w_5$ & $w_5$ & $w_5$ & $w_5$  & $\color{red}w_1$ & -  & $w_5$ & $w_5$ & $w_5$  \\
\hline
$\overline{C}_3$ & $w_1$ & $\color{red}w_1,w_6$  & $w_6$ & $w_1$ & $w_5$ & $w_1$ & $w_1$ & $\color{red}w_4,w_6$  & $w_6$ & $w_1$ & $w_5$ & $w_1$ \\
\hline
\end{tabular}
\end{center}
\caption{Overlap between the off-diagonal sets. Each entry gives the representation $(c,d)$ that satisfies \eqref{claim1} when $(i,j)$ belongs to the set specified by the leftmost column and $(m,n)$ belongs to the set specified by the topmost row.}
\label{table1}
\end{table}

\begin{table}[H]
\begin{center}
\begin{tabular}{ |a|c|c|c|c|c|c|c|c|c|c|c|c|c|c| } 
 \hline
 \rowcolor{eslb}
& $Q_4$ & $Q_5$  & $Q_6$ & $C_5$ & $C_6$\\
\hline
$Q_4$ & $w_6$ & -  & - & - & -   \\
\hline
$Q_5$ & - & triv. &  - & $w_1$ & - \\ 
\hline
$Q_6$ &- & - &  triv. & - & -  \\ 
\hline
$C_5$ & - & $w_1$  & - & $w_1$ & $w_1$ \\
\hline 
$C_6$ & - & - &  - & $w_1$ & triv. \\
\hline
\end{tabular}
\end{center}
\caption{Overlap between the diagonal sets.  Each entry gives the representation $(c,d)$ that satisfies \eqref{claim1} when $(i,i)$ belongs to the set specified by the leftmost column and $(m,m)$ belongs to the set specified by the topmost row.}
\label{table2}
\end{table}

\begin{table}[H]
\begin{center}
\begin{tabular}{ |a|c|c|c|c|c|c|c|c|c|c|c|c|c|c| } 
 \hline
 \rowcolor{eslb}
& $Q_1$  & $Q_2$    & $Q_3$  & $C_1$ & $C_2$ & $C_3$ & $\overline{Q}_1$ 
 &$\overline{Q}_2$  &$\overline{Q}_3$ &$\overline{C}_1$ &$\overline{C}_2$ & $\overline{C}_3$ \\
\hline
$Q_4$   & - & - & - & - & - & - & - & - & - & - & - & - \\
\hline
$Q_5$ & - & $ \color{red}w_1,w_6$  & $w_4$ & - & $\color{red}w_1$ & $w_1$ & - & $\color{red}w_4,w_6$  & $w_4$ & - & $w_1$ & $w_1$ \\
\hline
$Q_6$ & $w_1$ & $\color{red}w_1,w_6$  & $w_1$ & $w_1$ & $w_5$ & $w_1$ & $w_1$ & $\color{red}w_1,w_6$  & $w_2$ & $w_1$ & $w_5$ & $w_1$  \\
\hline
$C_5$ & -  & - & - & - & - & - & - & - & - & - & - & -   \\ 
\hline
$C_6$  & - & - & - & $w_1$ & $w_5$ & $w_1$ & - &  - & - & $w_1$ & $w_5$ & $w_1$ \\
\hline
\end{tabular}
\end{center}
\caption{Overlap between the diagonal and the off-diagonal sets.  Each entry gives the representation $(c,d)$ that satisfies \eqref{claim1} when $(i,i)$ belongs to the set specified by the leftmost column and $(m,n)$ belongs to the set specified by the topmost row.}
\label{table3}
\end{table}
In Tables \ref{table1}, \ref{table2} and \ref{table3} the entries "-" indicate that the two sets do not coexist for fixed values of $k,a,b$. The red coloured entries indicate that the representation $w_l$ appears under additional restrictions involving $k,a,b$ that are imposed such that the two sets in question coexist. The black coloured entries mean that the underlying statement holds generically. The entries "triv." in the table with the diagonal sets are there when $(i,j),(m,n)$ both belong to the same set and that set contains  a single representation. In that case we have $(i,i)=(m,m)$ and the theorem's statement trivially  holds. Lastly, wherever 
a cell in the table contains several choices for $(c,d)$  this means that the overlaps occur for different $w_i$ depending on the level.  We present the minimal set of statements that are needed to obtain the tables and a sample  proof of some of them  in appendix \ref{appendix1}.

\end{proof}

\subsection{Explicit description of all $\mathrm{SU(3)}$ special triples}
Theorem \ref{thmsu3} demonstrates that all special triples in $\mathrm{SU(3)}$ WZW theory must come from simple currents. 
We are going to describe all of them explicitly. Firstly, if $(a,b)\in {\cal J}_{3}$ we have $(i,j)=(m,n)=(0,0)$.
Assuming now $(a,b)\notin {\cal J}_{3}$ the non-identity simple currents must appear in $(a,b)\times (b,a)$. By lemma \ref{lemma2} we must have
 $a=b$ and $k=3a$. The representation $(\frac{k}{3},\frac{k}{3})$ is  fixed under the action of the simple currents and so is the set of representations that appears in its fusion with itself. 
 Geometrically the weights of representations in $(\frac{k}{3},\frac{k}{3})\times (\frac{k}{3},\frac{k}{3})$ fill the lattice points inside an equilateral triangle and the simple current 
 group ${\mathbb Z}_{3}$ acts by rotating this triangle. The representations of multiplicity one belong to the three sides of the triangle and are mapped to themselves. 
 To construct a special triple we take for $(i,j)$ a simple current and for $(m,n)$ any of the multiplicity one representations. 

%We are now one step closer to classifying the pairs $(i,j),(m,n)\in \mathcal{R}^{(1);k}_{(a,b),(b,a)}$ that satisfy 
%\begin{equation}\label{dim1}
   %  \operatorname{dim}\operatorname{Hom}\left((a,b)\times (b,a)\times (i,j)\times (m,n),(0,0)\right) =1\,.
%\end{equation}
%More specifically, accoding to Theorem \ref{thmsu3} at least one of $(i,j),(m,n)$ has to be a simple current which  is possible only when $a=b$ and $k=3a$. Considering only the non-trivial simple currents, suppose $(i,j)\in \{(3a,0),(0,3a)\}$. 

\mycomment{
We still need to specify the values of the second representation $(m,n)\in \mathcal{R}^{(1);3a}_{(a,a),(a,a)}$ that satisfy
\begin{equation}\label{dim2}
     \operatorname{dim}\operatorname{Hom}\left((a,a)\times (a,a)\times (i,j)\times (m,n),(0,0)\right) =1\,.
\end{equation}
To this end we notice by \eqref{currentaction} that $(a,a)$ is a fixed point under fusion with a simple current, hence we have
\begin{align}
     \operatorname{dim}\operatorname{Hom}\left((a,a)\times (a,a)\times (i,j)\times (m,n),(0,0)\right) &= 1 \nonumber \\
      &\hspace{-40mm}\Longleftrightarrow   \operatorname{dim}\operatorname{Hom}\left((a,a)\times (a,a) \times (m,n),(0,0)\right) = 1\nonumber \\
& \hspace{-40mm}\Longleftrightarrow  (m,n)\in \mathcal{R}^{(1);3a}_{(a,a),(a,a)}\,.
\end{align}
Therefore we have that any choice of  representation $(m,n)$ that appears with multiplicity $1$ in $(a,a)\times (a,a)$ will satisfy  \eqref{dim2} with $(i,j)\in \{(3a,0),(0,3a)\}$. In summary, the pairs  $(i,j),(m,n)\in \mathcal{R}^{(1);3a}_{(a,a),(a,a)}$ that satisfy \eqref{dim2} are as follows: at least one of $(i,j)$ or $(m,n)$ is a simple current while the other can be any representation in $\mathcal{R}^{(1);3a}_{(a,a),(a,a)}$.
\mycomment{
\begin{align}
    &(i,j)\in \{(3a,0),(0,3a)\} \qquad \quad \text{or}\quad \qquad (i,j)\in \mathcal{R}^{(1);3a}_{(a,a),(a,a)} \nonumber\\
    & (m,n)\in \mathcal{R}^{(1);3a}_{(a,a),(a,a)} \qquad \qquad \hspace{15mm} (m,n)\in \{(3a,0),(0,3a)\}
\end{align}
}

Finally, we can explicitly describe the set $\mathcal{R}^{(1);3a}_{(a,a),(a,a)}$ by setting $a=b$ and $k=3a$ in the sets of multiplicity $1$ representations we found in subsection \ref{sec4.2}. }

Using the descriptions in subsection \ref{sec4.2}  we obtain
\mycomment{
\begin{equation}\label{class1}
   (m,n)\in  \mathcal{R}^{(1);3a}_{(a,a),(a,a)} = \mathcal{W}_1\cup \overline{\mathcal{W}}_1\cup \mathcal{W}_2\cup \overline{\mathcal{W}}_2
\end{equation}
where
\begin{align}\label{class11}
    \mathcal{W}_1 &= \{(3r,0)\,|\, 0\leq r\leq a-1\} \nonumber \\
   \mathcal{W}_2 & = \bigg\{(6a-3r,3r-3a)\,|\, a\leq r\leq \floor*{\frac{3a}{2}}\bigg\}\,.
\end{align}
By noticing
\begin{equation}
    \mathcal{W}_2\cup \overline{\mathcal{W}}_2 = \{(6a-3r,3r-3a)\,|\, a\leq r\leq 2a\}
\end{equation}
we can rewrite \eqref{class1} equivalently as }
\begin{equation}\label{class2}
     \mathcal{R}^{(1);3a}_{(a,a),(a,a)} = \mathcal{W}_1\cup \overline{\mathcal{W}}_1\cup \mathcal{W}
\end{equation}
where
\begin{align}\label{class3}
 \mathcal{W}_1 & = \{(3r,0)\,|\, 0\leq r\leq a-1\} \quad \text{and}\nonumber \\
    \mathcal{W} &= \{(6a-3r,3r-3a)\,|\, a\leq r\leq 2a\} \, .
\end{align}
Alternatively the above $(i,j)$ and $(m,n)$ representations can be swapped. This describes all possible special triples.

\section{Final comments} \label{sec_comments}
In this paper we looked into the problem of constructing local fields invariant under a topological defect. One way of constructing such objects in theories 
with charge conjugation modular invariant is 
by realising special restrictions on the fusion rule. As explained in the introduction, in the bulk case it leads to the problem of finding the cases of fusion 
of irreducible representations that look like $a\times b = c$ and that was studied in \cite{Buican}. In the boundary case this leads to a different kind 
of condition in three irreducible representations  which we call special triples. In this paper we classified all special triples for $\mathrm{SU(2)}$ and 
$\mathrm{SU(3)}$ WZW theories. We showed that all such triples are related to simple currents. It is our working hypothesis that this holds for all 
$\mathrm{SU(N)}$ WΖW theories. Regarding the proofs it should be noted that while the main theorem for the $\mathrm{SU(2)}$ case was fairly elementary the $\mathrm{SU(3)}$ case was more involved. 
In our opinion it is fair to say that  the $\mathrm{SU(3)}$ problem was tackled by brute force and  it does not look like a good idea trying the same approach for other groups. 
It is known that the representations appearing in fusions always fill discrete subsets inside polytopes. To show that all special triples come from simple currents 
one would need to show that certain polytopes always have a non-empty intersection. Hopefully a geometric proof can be found that does not utilise any explicit formula for the fusion rule. We leave this question to future work. 

Another interesting direction may be to try to construct   subspaces of local boundary or bulk operators invariant under some open topological defects,  whose dimension is greater than 1. 
For example in \cite{paper1} in a rational free boson CFT a four-dimensional real subspace of boundary operators was constructed invariant under 
a set of topological defects.
In the approach  advocated in  the present paper, to find more examples  one could look for triples of representations $(a,i,x)$ for which 
\be \label{higher_dim}
{\rm dim}\, {\rm Hom}(a\times \bar a \times i \times  x\, , \mathbb{1}) = {\rm dim}\, {\rm Hom}(a\times \bar a \times i \, , \mathbb{1}) {\rm dim}\, {\rm Hom}(a\times \bar a  \times  x\, , \mathbb{1}) \, .
\ee
If this identity is realised the subspace of boundary fields corresponding to $i$, whose dimension is  ${\rm dim}\, {\rm Hom}(a\times \bar a \times i \, , \mathbb{1})$ would be a good candidate for an invariant subspace   under passing the open defect $X_{x}$ whose space of junctions with the boundary is of dimension ${\rm dim}\,{\rm Hom}(a\times \bar a  \times  x\, , \mathbb{1})$. We investigated numerically equation (\ref{higher_dim}) for $\mathrm{SU(3)}$ WZW theory and have not found any solutions besides the multiplicity 1 case investigated in section \ref{sec4}. 

Besides relying on the fusion rules the complete solution to the problem of constructing invariant subspaces of local operators should take into account the 
corresponding fusing matrices. One approach to that may be by using the tube algebra structures, see e.g. \cite{Bartsch_Bullimore}, \cite{paper1} for recent discussions.

\begin{center}

{\bf Acknowledgements}
\end{center}

We are indebted to Ingo Runkel for stimulating discussions and to Matt Buican for reading and commenting on the draft.
\appendix

\section{Passing a topological defect through a bulk field in RCFT}\label{appendix0}

In this appendix we use the  TFT approach \cite{FRS1} in order to interpret Figure \ref{figintro1} as an equation between morphisms and obtain the coefficients $Q^{\alpha e \beta}_{i,d}$

Let $\left(\mathcal{C},\times,\mathbb{1}\right)$  denote the modular tensor category encoding the chiral data of the RCFT and let $\{U_i\}_{i\in\mathcal{I}}$ be the simple objects of $\mathcal{C}$ with the convention $U_0\equiv \mathbb{1}$.  To relief notation we will usually write only the label $i$ in place of $U_i$. For these simple objects we fix  bases $\{\lambda^\alpha_{(i,j)k}\}$ in the hom-spaces $\operatorname{Hom}(i\times j,k)$ as well as dual bases $\{\bar{\lambda}_{\alpha}^{(i,j)k}\}$ in $\operatorname{Hom}(k,i\times j)$. These bases will be depicted as
\begin{equation}\label{bases1}
    \vcenter{\hbox{\hspace{2mm}\begin{tikzpicture}[font=\footnotesize,inner sep=2pt]
  \begin{feynman}
  \vertex (j1) at (0,0) [label=below:\(i\)];
  \vertex [above=1.7cm of j1] (j4);
  \draw [fill=yellow] (-0.25,1.8) rectangle (0.75, 1.3);
  \vertex [above=1.3cm of j1] (j3) [label=above:\( \hspace{5mm}\lambda^\alpha_{(i,j)k}\)];
  \vertex [right=0.5cm of j1] (i1) [label=below:\(j\)];
  \vertex (x) at (0.25,3) [label=above:\(k\)];
  \vertex [below=1.2cm of x] (x2);
  \vertex[above=1.7 cm of i1] (i4);
  \vertex[ above=1.3 cm of i1] (i3);
   \diagram*{
    (j1)--[thick] (j3),
    (i1)--[thick] (i3),
    (x2)--[thick] (x)
  };
  \end{feynman}
\end{tikzpicture}}}
~=~
  \vcenter{\hbox{\hspace{2mm}\begin{tikzpicture}[font=\footnotesize,inner sep=2pt]
  \begin{feynman}
\vertex (i1) at (0,0) [label=below:\(i\)];
\vertex [right=1cm of i1] (j1) [label=below:\(j\)];
\vertex [above=0.5cm of i1] (i2);
\vertex [above =0.5cm of j1] (j2);
\vertex  [right=0.5cm of i1] (k1) ;
\vertex [small,orange, dot][above =1.5cm of k1] (k2)[label=right:\(\alpha\)] {};
\vertex [above=1.5cm of k2] (k3) [label=above:\(k\)];
   \draw [thick,rounded corners=1mm] (i1)--(i2)--(k2);
      \draw [thick,rounded corners=1mm] (j1)--(j2)--(k2);
   \diagram*{
(k2)--[thick] (k3)
  };
  \end{feynman}
\end{tikzpicture}}}
\hspace{20mm}
 \vcenter{\hbox{\hspace{2mm}\begin{tikzpicture}[font=\footnotesize,inner sep=2pt]
  \begin{feynman}
  \vertex (j1) at (0,0) [label=above:\(i\)];
  \vertex [below=1.3cm of j1] (j2);
  \draw [fill=yellow] (-0.25,-1.8) rectangle (0.75, -1.3);
  \vertex [below=1.8cm of j1] (j3) [label=\( \hspace{5mm}\bar{\lambda}_{\alpha}^{(i,j)k}\)];
  \vertex [right=0.5cm of j1] (i1) ;
  \vertex[below=0.04cm of i1] (xx) [label=above:\(j\)];
  \vertex (x) at (0.25,-3) [label=below:\(k\)];
  \vertex [above=1.2cm of x] (x2);
  \vertex[below=1.3cm of i1] (i2);
   \diagram*{
    (j1)--[thick] (j2),
    (i1)--[thick] (i2),
    (x)--[thick] (x2)
  };
  \end{feynman}
\end{tikzpicture}}}
~=~
  \vcenter{\hbox{\hspace{2mm}\begin{tikzpicture}[font=\footnotesize,inner sep=2pt]
  \begin{feynman}
\vertex (i1) at (0,0) [label=below:\(k\)];
\vertex [small,orange, dot][above=1.5cm of i1] (i2) [label=right:\(\alpha\)]{};
\vertex [above=3cm of i1] (k1);
\vertex [right =0.5cm of k1] (k2) ;
\vertex[below=0.04cm of k2] (xx) [label=above:\(j\)];
\vertex  [left=0.5cm of k1] (k3) [label=above:\(i\)] ;
\vertex [below=0.5cm of k3] (k4);
\vertex [below=0.5cm of k2] (k5);
   \draw [thick,rounded corners=1mm] (k2)--(k5)--(i2);
      \draw [thick,rounded corners=1mm] (k3)--(k4)--(i2);
   \diagram*{
(i2)--[thick] (i1)
  };
  \end{feynman}
\end{tikzpicture}}}
\end{equation}
where $\alpha\in \{ 1,\ldots,\tensor{N}{_{ij}}{^k}\}$ with $N_{ij}^k$ being the fusion coefficients defined as 
    \begin{equation}
   N_{ij}^k:=\operatorname{dim}\operatorname{Hom}(i\times j,k) \, .
    \end{equation}
    Duality of these bases means that the following normalisation holds
\begin{equation} \label{bubble}
\vcenter{\hbox{\hspace{-10mm}\begin{tikzpicture}[font=\footnotesize,inner sep=2pt]
  \begin{feynman}
  \vertex  [small,orange, dot] (del) at (0,0) [label=below:\(\alpha\)] {} ;
  \vertex  [small,orange, dot] (gam) at (0,-2) [label=above:\(\beta\)] {} ;
 \vertex [below=1cm of gam] (j1) [label=below:\(j\)];
  \vertex [above=1cm of del] (j2) [label=above:\(m\)];
   \diagram*{
     (j2)--[thick] (del),
     (j1)--[thick] (gam),
     (del) --[half left, thick, looseness=1.2,edge label=$k$] (gam),
     (gam) --[half left, thick, looseness=1.2, edge label=$i$] (del)
  };
  \end{feynman}
\end{tikzpicture}}}
~\hspace{5mm}=\delta_{m,j}\delta_{\alpha,\beta} 
~ 
\vcenter{\hbox{\hspace{5mm}\begin{tikzpicture}[font=\footnotesize,inner sep=2pt]
  \begin{feynman}
  \vertex (1) at (0,1) [label=above:\(j\)];
  \vertex (2) at (0,-3) [label=below:\(j\)];
   \diagram*{
     (1)--[thick] (2)
  };
  \end{feynman}
\end{tikzpicture}}}\end{equation}
These bases also satisfy a completeness relation due to the semisimplicity of the category $\mathcal{C}$ \cite{FRS1}:
\begin{equation}\label{completeness}
    \vcenter{\hbox{\hspace{-10mm}\begin{tikzpicture}[font=\footnotesize,inner sep=2pt]
  \begin{feynman}
  \vertex  (i1) at (0,0) [label=below:\(i\)]  ;
  \vertex [above=3cm of i1] (i2) [label=above:\(i\)];
  \vertex [right=1cm of i1] (j1) [label=below:\(j\)];
  \vertex [right=1cm of i2] (j2) [label=above:\(j\)];
  \diagram*{
  (i1)--[thick] (i2),
  (j1)--[thick] (j2)
    };
  \end{feynman}
\end{tikzpicture}}}
~\hspace{5mm}=\mathlarger{\sum}_{k\in\mathcal{I}}\mathlarger{\sum}_\gamma\,
~ 
\vcenter{\hbox{\hspace{1mm}\begin{tikzpicture}[font=\footnotesize,inner sep=2pt]
  \begin{feynman}
  \vertex  [small,orange,dot] (up) at (0,0) [label=above:\(\gamma\)] {};
  \vertex [small,orange,dot] [below=1cm of up] (down) [label=below:\(\gamma\)] {};
  \vertex [above left=0.75cm of up] (i1);
  \vertex [above=0.6cm of i1] (i2) [label=above:\(i\)];
  \vertex [above right=0.75cm of up] (j1);
  \vertex [above=0.6cm of j1] (j2) [label=above:\(j\)];
  \draw [thick,rounded corners=1mm] (up) -- (i1) --(i2);
  \draw [thick,rounded corners=1mm] (up) -- (j1) --(j2);
  \vertex [below left=0.75cm of down] (i3) ;
  \vertex [below=0.6cm of i3] (i4) [label=below:\(i\)];
  \vertex [below right=0.75cm of down] (j3);
  \vertex [below=0.6cm of j3] (j4) [label=below:\(j\)];
  \draw [thick,rounded corners=1mm] (down) -- (j3) --(j4);
   \draw [thick,rounded corners=1mm] (down) -- (i3) --(i4);
   \diagram*{
      (down)--[thick,edge label=$k$] (up)
  };
  \end{feynman}
\end{tikzpicture}}}
\end{equation}
Another piece of notation needed is the braiding morphisms of $\mathcal{C}$ which will be depicted as 
\begin{equation}
     \vcenter{\hbox{\hspace{2mm}\begin{tikzpicture}[font=\footnotesize,inner sep=2pt]
  \begin{feynman}
  \vertex (j1) at (0,0) [label=below:\(i\)];
  \vertex [above=3cm of j1] (j2) [label=above:\(j\)];
  \vertex [above=1.7cm of j1] (j4);
  \draw [fill=yellow] (-0.2,1.7) rectangle (0.7, 1.3);
  \vertex [above=1.3cm of j1] (j3) [label=above:\( \hspace{4mm}c_{i,j}\)];
  \vertex [right=0.5cm of j1] (i1) [label=below:\(j\)];
  \vertex[above=3cm of i1] (i2) [label=above:\(i\)];
  \vertex[above=1.7 cm of i1] (i4);
  \vertex[ above=1.3 cm of i1] (i3);
   \diagram*{
    (j1)--[thick] (j3),
    (j4)--[thick] (j2),
    (i1)--[thick] (i3),
    (i4)--[thick] (i2)
  };
  \end{feynman}
\end{tikzpicture}}}
~=~
\vcenter{\hbox{\hspace{2mm}\begin{tikzpicture}[font=\footnotesize,inner sep=2pt]
  \begin{feynman}
  \vertex (j1) at (0,0) [label=below:\(i\)];
   \vertex [right=1cm of j1] (i1) [label=below:\(j\)];
  \vertex [above=3cm of j1] (j2) [label=above:\(j\)];
  \vertex [above=1cm of j1] (j4);
  \vertex [above=2cm of j1] (j5);
  \vertex [above=2cm of i1] (i5);
  \vertex[above=3cm of i1] (i2) [label=above:\(i\)];
  \vertex[above=1 cm of i1] (i4);
  \vertex[above left=0.6cm of i4] (i6);
   \vertex [above left=0.2 cm of i6] (i7);
   \draw [thick,rounded corners=1mm] (j4)--(i5)--(i2);
    \draw [thick,rounded corners=1mm] (i1)--(i4) -- (i6);
    \draw [thick,rounded corners=1mm] (i7)--(j5) -- (j2);
   \diagram*{
   (j1)--[thick] (j4)
  };
  \end{feynman}
\end{tikzpicture}}}
\hspace{15mm}
 \vcenter{\hbox{\hspace{2mm}\begin{tikzpicture}[font=\footnotesize,inner sep=2pt]
  \begin{feynman}
  \vertex (j1) at (0,0) [label=below:\(i\)];
  \vertex [above=3cm of j1] (j2) [label=above:\(j\)];
  \vertex [above=1.8cm of j1] (j4);
  \draw [fill=yellow] (-0.2,1.8) rectangle (0.7, 1.3);
  \vertex [above=1.3cm of j1] (j3) [label=above:\( \hspace{4mm}c^{-1}_{j,i}\)];
  \vertex [right=0.5cm of j1] (i1) [label=below:\(j\)];
  \vertex[above=3cm of i1] (i2) [label=above:\(i\)];
  \vertex[above=1.8 cm of i1] (i4);
  \vertex[ above=1.3 cm of i1] (i3);
   \diagram*{
    (j1)--[thick] (j3),
    (j4)--[thick] (j2),
    (i1)--[thick] (i3),
    (i4)--[thick] (i2)
  };
  \end{feynman}
\end{tikzpicture}}}
~=~
\vcenter{\hbox{\hspace{2mm}\begin{tikzpicture}[font=\footnotesize,inner sep=2pt]
  \begin{feynman}
  \vertex (j1) at (0,0) [label=below:\(i\)];
   \vertex [right=1cm of j1] (i1) [label=below:\(j\)];
  \vertex [above=3cm of j1] (j2) [label=above:\(j\)];
  \vertex [above=1cm of j1] (j4);
  \vertex [above=2cm of j1] (j5);
  \vertex [above=2cm of i1] (i5);
  \vertex[above=3cm of i1] (i2) [label=above:\(i\)];
  \vertex[above=1 cm of i1] (i4);
  \vertex[above right=0.6cm of j4] (j6);
   \vertex [above right=0.2 cm of j6] (j7);
   \draw [thick,rounded corners=1mm] (j1)--(j4)--(j6);
    \draw [thick,rounded corners=1mm] (j7)--(i5) -- (i2);
    \draw [thick,rounded corners=1mm] (i4)--(j5) -- (j2);
   \diagram*{
   (i1)--[thick] (i4)
  };
  \end{feynman}
\end{tikzpicture}}}
\end{equation}
Using the chosen basis $\{\lambda^\alpha_{(i,j)k}\}$ we define the  braiding matrices by
\begin{equation}\label{Bmatrix}
    \vcenter{\hbox{\hspace{2mm}\begin{tikzpicture}[font=\footnotesize,inner sep=2pt]
  \begin{feynman}
  \vertex (j1) at (0,0) [label=below:\(i\)];
   \vertex [right=1cm of j1] (i1) [label=below:\(j\)];
  \vertex [above=2cm of j1] (j2);
  \vertex [above=0.7cm of j1] (j4);
  \vertex [above=1.5cm of j1] (j5);
  \vertex [above=1.5cm of i1] (i5);
  \vertex[above=2cm of i1] (i2) ;
  \vertex[above=0.7 cm of i1] (i4);
  \vertex[above left=0.5cm of i4] (i6);
   \vertex [above left=0.2 cm of i6] (i7);
   \draw [thick,rounded corners=1mm] (j4)--(i5)--(i2);
    \draw [thick,rounded corners=1mm] (i1)--(i4) -- (i6);
    \draw [thick,rounded corners=1mm] (i7)--(j5) -- (j2);
    \vertex [right=0.5cm of j2] (k1);
    \vertex [small,orange, dot] [above=0.5cm  of k1] (k2) [label=right:\(\footnotesize \alpha\)] {};
    \draw [thick,rounded corners=1mm] (j2)--(k2);
     \draw [thick,rounded corners=1mm] (i2)--(k2);
     \vertex[above=0.6cm of k2] (k3) [label=above:\(k\)];
   \diagram*{
   (j1)--[thick] (j4),
   (k3)--[thick] (k2)
  };
  \end{feynman}
\end{tikzpicture}}}
~=\;\mathlarger{\sum}_\beta \mathrm{R}^{(i\,j)k}_{\alpha \beta}~
 \vcenter{\hbox{\begin{tikzpicture}[font=\footnotesize,inner sep=2pt]
  \begin{feynman}
\vertex (i1) at (0,0) [label=below:\(i\)];
\vertex [right=1cm of i1] (j1) [label=below:\(j\)];
\vertex [above=0.5cm of i1] (i2);
\vertex [above =0.5cm of j1] (j2);
\vertex  [right=0.5cm of i1] (k1) ;
\vertex [small,orange, dot][above =1.5cm of k1] (k2)[label=right:\(\beta\)] {};
\vertex [above=1.5cm of k2] (k3) [label=above:\(k\)];
   \draw [thick,rounded corners=1mm] (i1)--(i2)--(k2);
      \draw [thick,rounded corners=1mm] (j1)--(j2)--(k2);
   \diagram*{
(k2)--[thick] (k3)
  };
  \end{feynman}
\end{tikzpicture}}}
\end{equation}
as well as the fusing matrices by
\begin{equation}\label{Fmatrix}
   \vcenter{\hbox{\hspace{2mm}\begin{tikzpicture}[font=\footnotesize,inner sep=2pt]
   \begin{feynman}
  \vertex (l) at (0,0) [label=below:\(l\)] ;
  \vertex [small,orange,dot] [above=1cm of l] (gam) [label=above:\footnotesize\(\delta\)] {};
  \vertex [small,orange,dot] [above left=1.25cm of gam] (del) [label=above:\footnotesize\(\gamma\)] {} ;
  \vertex [above right=1cm of gam] (k1);
  \vertex [above=1.4cm of k1] (k) [label=above:\(k\)];
  \vertex [above left=0.7cm of del] (i1);
  \vertex [above=0.7cm of i1] (i) [label=above: \(i\)];
   \vertex [above right=0.7cm of del] (j1);
  \vertex [above=0.7cm of j1] (j) [label=above: \(j\)];
   \draw [thick,rounded corners=1mm] (gam) -- (k1) --(k);
     \draw [thick,rounded corners=1mm] (del) -- (j1) --(j);
       \draw [thick,rounded corners=1mm] (del) -- (i1) --(i);
    \diagram*{
     (l)--[thick] (gam),
     (gam) --[thick,edge label= $q$] (del)
    };
  \end{feynman}
\end{tikzpicture}}}
~=\;\mathlarger{\sum}_{p\in\mathcal{I}}\mathlarger{\sum}_{\alpha,\beta}\mathrm{F}^{(i\,j\,k)l}_{\alpha p\beta,\gamma q\delta}~
\vcenter{\hbox{\begin{tikzpicture}[font=\footnotesize,inner sep=2pt]
  \begin{feynman}
  \vertex (l) at (0,0) [label=below:\(l\)] ;
  \vertex [small,orange,dot] [above=1cm of l] (beta) [label=above:\footnotesize\(\alpha\)] {};
  \vertex [above left=1cm of beta] (i1);
  \vertex [above=1.4cm of i1] (i) [label=above:\(i\)];
  \vertex [small,orange,dot] [above right=1.25cm of beta] (alpha) [label=right:\footnotesize\(\,\beta\)] {};
  \vertex[above left=0.7cm of alpha] (j1);
  \vertex [above=0.7cm of j1] (j) [label=above:\(j\)];
  \vertex [above right=0.7cm of alpha] (k1);
    \vertex [above=0.7cm of k1] (k) [label=above:\(k\)];
    \draw [thick,rounded corners=1mm] (alpha) -- (j1) --(j);
    \draw [thick,rounded corners=1mm] (alpha) -- (k1) --(k);
    \draw [thick,rounded corners=1mm] (beta) -- (i1) --(i);
    \diagram*{
      (alpha) --[thick,edge label= $p$] (beta),
      (l)--[thick] (beta)
    };
  \end{feynman}
\end{tikzpicture}}}
\end{equation}
and their inverses by 
\begin{equation} \label{Gmatrix}
\vcenter{\hbox{\hspace{-10mm}\begin{tikzpicture}[font=\footnotesize,inner sep=2pt]
  \begin{feynman}
  \vertex (l) at (0,0) [label=below:\(l\)] ;
  \vertex [small,orange,dot] [above=1cm of l] (beta) [label=above:\(\beta\)] {};
  \vertex [above left=1cm of beta] (i1);
  \vertex [above=1.4cm of i1] (i) [label=above:\(i\)];
      \vertex [small,orange,dot] [above right=1.25cm of beta] (alpha) [label=right:\(\,\alpha\)]{};
  \vertex[above left=0.7cm of alpha] (j1);
  \vertex [above=0.7cm of j1] (j) [label=above:\(j\)];
  \vertex [above right=0.7cm of alpha] (k1);
    \vertex [above=0.7cm of k1] (k) [label=above:\(k\)];
    \draw [thick,rounded corners=1mm] (alpha) -- (j1) --(j);
    \draw [thick,rounded corners=1mm] (alpha) -- (k1) --(k);
    \draw [thick,rounded corners=1mm] (beta) -- (i1) --(i);
    \diagram*{
      (alpha) --[thick,edge label= $q$] (beta),
      (l)--[thick] (beta)
    };
  \end{feynman}
\end{tikzpicture}}}
~\hspace{5mm}=\mathlarger{\sum}_{p\in\mathcal{I}}\mathlarger{\sum}_{\gamma,\delta}\,\mathrm{G}_{\gamma p \delta,\alpha q\beta}^{(i\,j\,k)l}
~ 
\vcenter{\hbox{\hspace{2mm}\begin{tikzpicture}[font=\footnotesize,inner sep=2pt]
   \begin{feynman}
  \vertex (l) at (0,0) [label=below:\(l\)] ;
  \vertex [small,orange,dot] [above=1cm of l] (gam) [label=above:\(\gamma\)] {};
  \vertex [small,orange,dot] [above left=1.25cm of beta] (del) [label=above:\(\delta\)] {} ;
  \vertex [above right=1cm of gam] (k1);
  \vertex [above=1.4cm of k1] (k) [label=above:\(k\)];
  \vertex [above left=0.7cm of del] (i1);
  \vertex [above=0.7cm of i1] (i) [label=above: \(i\)];
   \vertex [above right=0.7cm of del] (j1);
  \vertex [above=0.7cm of j1] (j) [label=above: \(j\)];
   \draw [thick,rounded corners=1mm] (gam) -- (k1) --(k);
     \draw [thick,rounded corners=1mm] (del) -- (j1) --(j);
       \draw [thick,rounded corners=1mm] (del) -- (i1) --(i);
    \diagram*{
     (l)--[thick] (gam),
     (gam) --[thick,edge label=\small $p$] (del)
    };
  \end{feynman}
\end{tikzpicture}}}\end{equation}

Finally, in this framework the modular $S$-matrix of the CFT is defined as
\begin{equation}\label{sdef}
 \hspace{10mm}\mathlarger{S_{ij}}
~:=\mathlarger{S_{00}\operatorname{tr}\left(c_{i,j}\circ c_{j,i}\right)=}\; \mathlarger{S_{00}\dim(i)\dim(j)}
 \vcenter{\hspace{1mm}\hbox{\begin{tikzpicture}[font=\footnotesize,inner sep=2pt]
  \begin{feynman}
\vertex (i1) at (0,0) ;
\vertex [above=1cm of i1] (i2);
\vertex[right=2cm of i1] (i3);
\vertex[right=2cm of i2] (i4);
\draw  [thick] (i4)-- (i3);
\draw [thick, half left, looseness=1.5] (i2) to  (i4);
\vertex[right=0.64cm of i1] (j1);
\vertex[above=0.5cm of j1] (j11) [label=right:\(\hspace{1mm}j\)];
\vertex[above=0.5cm of i1] (i11) ;
\vertex[left=0.1cm of i11] (i111) [label=left:\(i\)];
\vertex[above=1cm of j1] (j2);
\vertex[left=2cm of j1] (j4);
\vertex[left=2cm of j2] (j3);
\vertex[below right=0.5cm of i1] (m1);
\vertex[right=0.25cm of m1] (m11);
\vertex[below=0.15cm of m11] (m2);
\vertex[right=0.4cm of m2] (m3);
\vertex[below=0.35cm of m3] (botr) ;
\draw [thick, quarter right, looseness=1.2] (botr) to (i3);
\draw [thick]    (m2) to[out=-45,in=-30] (botr);
\draw [thick]    (i1) to[out=-90,in=-20] (m1);
\vertex[above left=0.5cm of j2] (n11);
\vertex[above=0.1cm of n11] (n1);
\vertex[left=0.25cm of n1] (n2);
\vertex[above=0.2cm of n2] (n3);
\vertex[left=0.4cm of n3] (n4);
\vertex[small,orange,dot][above=0.2cm of n4] (topl){};
\vertex[above=0.6cm of topl] (topl0) [label=above:\(0\)];
\draw [thick]    (n3) to[out=-30,in=-25] (topl);
\draw [thick]    (j2) to[out=90,in=-45] (n1);
\draw [thick, half left, looseness=1.5] (j1) to  (j4);
\vertex[small,orange,dot] [below=2.78cm of topl] (botl){};
\vertex[below=0.6cm of botl] (botl0) [label=below:\(0\)];
\vertex[small,orange,dot][right=0.0001cm of botr] (botrr){};
\vertex[below=0.62cm of botrr] (botr0) [label=below:\(0\)];
\vertex[small,orange,dot][above=2.68cm of botr] (topr) {};
\vertex[above=0.62cm of topr] (topr0) [label=above:\(0\)];
   \diagram*{
(i1)--[thick] (i2),
(j1)--[thick] (j2),
(j3)--[thick] (j4),
(topl) -- [thick, quarter right, looseness=1.2] (j3),
(topl)--[dashed] (topl0),
(botl)--[dashed] (botl0),
(botrr)--[dashed] (botr0),
(topr0)--[dashed] (topr)
  };
  \end{feynman}
\end{tikzpicture}}}
\end{equation}
where $\dim(i)$ is the quantum dimension of an object $U_i\in\operatorname{Obj}\left(\mathcal{C}\right)$. It is expressed in terms of the $S$-matix as
\begin{equation}
    \operatorname{dim}(i)= \frac{S_{i0}}{S_{00}}\,.
\end{equation}

We will work with the trivial Frobenius algebra $A=\mathbb{1}$ which corresponds to the charge conjugation modular invariant theory.  In this case the topological defects are in one-to-one corerspondence with the simple objects of $\mathcal{C}$ hence labelled by the same set $\mathcal{I}$ as  the irreducible representations of the chiral algebra. Using the representations of field insertions introduced in \cite{FRS4}, in Figure \ref{figintro1} the morphism representing the bulk field insertion is
\begin{equation}
    \phi_{i,\overline{i}}\in\operatorname{Hom}( i\times \overline{i},\mathbb{1}) %% changed 0 to 1
\end{equation}
while for the defect field 
\begin{equation}
    \mu_{i,\overline{i}}^{e;\beta}\in \Hom_{\mathbb{1}|\mathbb{1}}\left(i\times \mathbb{1}\times  \overline{i},e\right) \cong \Hom(i\times \overline{i},e)
\end{equation}
so we can identify the basis $ \mu_{i,\overline{i}}^{e;\beta}$ of defect fields  with the  basis $\lambda_{(i,\overline{i})e}^\beta$. Finally, the triple defect junction appearing in Figure \ref{figintro1} is represented by $\lambda^\alpha_{(d,e)d}$. Using this notation we rewrite Figure \ref{figintro1} as the following equation between morphisms

\begin{equation}\label{Qcalc1}
    \vcenter{\hbox{\hspace{-10mm}\begin{tikzpicture}[font=\footnotesize,inner sep=2pt]
  \begin{feynman}
  \vertex  (i1) at (0,0) [label=below:\(i\)]  ;
  \vertex [above=1.2cm of i1] (i2) ;
  \vertex [right=1cm of i1] (d1) [label=below:\(d\)];
  \vertex[above=4cm of d1] (d4) [label=above:\(d\)];
    \vertex[above=1.6cm of d1] (d2);
    \vertex[above=2cm of d1] (d3);
    \vertex[right=1cm of d1] (m0);
    \vertex [small,orange,dot][above=2.35cm of m0] (m1) {};
    \draw [thick,rounded corners=1mm] (i1) -- (i2) --(m1);
    \vertex[right=1cm of m0] (j1) [label=below:\(\overline{i}\)] ;
    \vertex[above=1.2cm of j1] (j2);
    \draw [thick,rounded corners=1mm] (j1) -- (j2) --(m1);
    \vertex[right=1cm of d4] (m2) [label=above:\(0\)];
    \draw [dashed] (m1) -- (m2);
  \diagram*{
  (d1)--[thick] (d2),
  (d3)--[thick] (d4),
  
    };
  \end{feynman}
\end{tikzpicture}}}
~\hspace{5mm}=\mathlarger{\sum}_{e\in\mathcal{I}}\mathlarger{\sum}_{\alpha,\beta}\, Q^{\alpha e\beta}_{id}
~ 
\vcenter{\hbox{\hspace{5mm}\begin{tikzpicture}[font=\footnotesize,inner sep=2pt]
  \begin{feynman}
   \vertex  (i1) at (0,0) [label=below:\(i\)]  ;
  \vertex [above=1cm of i1] (i2) ;
  \vertex (i3) at (0.9,1.6);
  \vertex (i4) at (1.1,1.7);
  \vertex [right=1cm of i1] (d1) [label=below:\(d\)];
  \vertex[above=4cm of d1] (d4) [label=above:\(d\)];
    \vertex[above=1.6cm of d1] (d2);
    \vertex[above=2cm of d1] (d3);
    \vertex[right=1cm of d1] (m0);
    \vertex [small,orange,dot][above=2.35cm of m0] (m1) [label=right:\(\beta\)] {};
    \draw [thick,rounded corners=1mm] (i1) -- (i2) --(i3);
    \vertex[right=1cm of m0] (j1) [label=below:\(\overline{i}\)] ;
    \vertex[above=1cm of j1] (j2);
    \draw [thick,rounded corners=1mm] (j1) -- (j2) --(m1);
    \vertex[right=1cm of d4] (m2) ;
    \vertex[above=0.7cm of m1] (m3);
    \vertex (e) at (1.8,2.8) [label=\(e\)];
    \draw [thick] (d1)--(d4);
    \vertex[small,orange,dot][below=0.5cm of d4] [label=left:\(\alpha\)] (m4){};
     \draw [thick,rounded corners=1mm] (m1) -- (m3) --(m4);
  \diagram*{
  (i4)--[thick] (m1)
    };
  \end{feynman}
\end{tikzpicture}}}
\end{equation}
To extract the coefficients $Q$ we compose \eqref{Qcalc1} with the following dual morphism
\begin{equation}\label{Qcalc2}
    \vcenter{\hbox{\hspace{-10mm}\begin{tikzpicture}[font=\footnotesize,inner sep=2pt]
  \begin{feynman}
  \vertex  (i1) at (0,0) [label=above:\(i\)]  ;
  \vertex [right=1cm of i1] (d1) [label=above:\(d\)] ;
  \vertex[right=3cm of i1] (j1)[label=above:\(\overline{i}\)];
 \vertex[right=2cm of i1] (m1);
 \vertex[below=4cm of d1] (d4) [label=below:\(d\)];
 \draw [thick] (d1)-- (d4);
 \vertex[below=1cm of i1] (i2);
 \vertex[below=1cm of j1] (j2);
 \vertex[small,orange,dot][below=2.35cm of m1] (m2) [label=above:\(\;\beta^\prime\)]{};
 \vertex (i3) at (0.9,-1.6);
 \draw [thick,rounded corners=1mm] (i1) -- (i2) --(i3);
  \draw [thick,rounded corners=1mm] (j1) -- (j2) --(m2);
 \vertex (i4) at (1.1,-1.7);
 \vertex[below=0.7cm of m2] (m3);
 \vertex[small,orange,dot][above=0.5cm of d4] (m4) [label=left:\(\alpha^\prime\)]{};
 \draw [thick,rounded corners=1mm] (m2) -- (m3) --(m4);
 \vertex (e) at (1.8,-3.1) [label=\(e^\prime\)];
  \diagram*{
 (i4)--[thick] (m2)
    };
  \end{feynman}
\end{tikzpicture}}}
\end{equation}
and we find
\begin{equation}\label{Qcalc3}
    \vcenter{\hbox{\hspace{-10mm}\begin{tikzpicture}[font=\footnotesize,inner sep=2pt]
  \begin{feynman}
  \vertex  (i1) at (0,0)   ;
  \vertex[left=0.1cm of i1] (i11) ;
  \vertex[above=0.6cm of i11] (i111) [label=left:\(i\)];
  \vertex [above=1.2cm of i1] (i2) ;
  \vertex [right=1cm of i1] (d1) ;
  \vertex[below=2.5cm of d1] (dbot) [label=below:\(d\)];
  \vertex[above=3cm of d1] (d4) [label=above:\(d\)];
    \vertex[above=1.6cm of d1] (d2);
    \vertex[above=2cm of d1] (d3);
    \vertex[right=1cm of d1] (m0);
    \vertex [small,orange,dot][above=2.35cm of m0] (m1) {};
    \draw [thick,rounded corners=1mm] (i1) -- (i2) --(m1);
    \vertex[right=1cm of m0] (j1)  ;
    \vertex[right=0.1cm of j1] (j11) ;
    \vertex[above=0.6cm of j11] (j111) [label=right:\(\overline{i}\)];
    \vertex[above=1.2cm of j1] (j2);
    %\draw [thick,rounded corners=1mm] (j1) -- (j2) --(m1);
    \vertex[right=1cm of d4] (m2) [label=above:\(0\)];
    \draw [dashed] (m1) -- (m2);
 \vertex (i3) at (0.85,-0.5);
 \vertex (i4) at (1.1,-0.6);
 \vertex[small,orange,dot][below=3.5cm of m1] (m3)[label=above:\(\beta\)] {};
 \vertex[below=0.6cm of m3] (e1);
\draw [thick] (dbot)-- (d1);
  \vertex[small,orange,dot][above=0.5cm of dbot] (e2) [label=left:\(\alpha\)]{};
  \draw [thick,rounded corners=1mm] (m3) -- (e1) --(e2);
  \vertex (e3) at (1.8,-1.7) [label=\(e\)];
  \diagram*{
  (d1)--[thick] (d2),
  (d3)--[thick] (d4),
  (i3)--[thick] (i1),
  (m3)--[thick] (i4),
  (m3) -- [thick, half right, looseness=1] (m1)
    };
  \end{feynman}
\end{tikzpicture}}}
~\hspace{5mm}= Q^{\alpha e\beta}_{id}
~ 
\vcenter{\hbox{\hspace{5mm}\begin{tikzpicture}[font=\footnotesize,inner sep=2pt]
  \begin{feynman}
   \vertex  (i1) at (0,0) [label=below:\(d\)]  ;
  \vertex [above=5cm of i1] (i2) [label=above:\(d\)] ;
  \diagram*{
  (i1)--[thick] (i2)
    };
  \end{feynman}
\end{tikzpicture}}}
\end{equation}
from which we can obtain $Q^{\alpha e \beta}_{id}$ by manipulating the diagram on the left-hand side. More specifically we find that the left-hand side of \eqref{Qcalc3} is
\begin{align}\label{Qcalc4}
   & \vcenter{\hbox{\hspace{-10mm}\begin{tikzpicture}[font=\footnotesize,inner sep=2pt]
  \begin{feynman}
  \vertex  (i1) at (0,0)   ;
  \vertex[left=0.1cm of i1] (i11) ;
  \vertex[above=0.6cm of i11] (i111) [label=left:\(i\)];
  \vertex [above=1.2cm of i1] (i2) ;
  \vertex [right=1cm of i1] (d1) ;
  \vertex[below=2.5cm of d1] (dbot) [label=below:\(d\)];
  \vertex[above=3cm of d1] (d4) [label=above:\(d\)];
    \vertex[above=1.6cm of d1] (d2);
    \vertex[above=2cm of d1] (d3);
    \vertex[right=1cm of d1] (m0);
    \vertex [small,orange,dot][above=2.35cm of m0] (m1) {};
    \draw [thick,rounded corners=1mm] (i1) -- (i2) --(m1);
    \vertex[right=1cm of m0] (j1)  ;
    \vertex[right=0.1cm of j1] (j11) ;
    \vertex[above=0.6cm of j11] (j111) [label=right:\(\overline{i}\)];
    \vertex[above=1.2cm of j1] (j2);
    %\draw [thick,rounded corners=1mm] (j1) -- (j2) --(m1);
    \vertex[right=1cm of d4] (m2) [label=above:\(0\)];
    \draw [dashed] (m1) -- (m2);
 \vertex (i3) at (0.85,-0.5);
 \vertex (i4) at (1.1,-0.6);
 \vertex[small,orange,dot][below=3.5cm of m1] (m3)[label=above:\(\beta\)] {};
 \vertex[below=0.6cm of m3] (e1);
\draw [thick] (dbot)-- (d1);
  \vertex[small,orange,dot][above=0.5cm of dbot] (e2) [label=left:\(\alpha\)]{};
  \draw [thick,rounded corners=1mm] (m3) -- (e1) --(e2);
  \vertex (e3) at (1.8,-1.7) [label=\(e\)];
  \diagram*{
  (d1)--[thick] (d2),
  (d3)--[thick] (d4),
  (i3)--[thick] (i1),
  (m3)--[thick] (i4),
  (m3) -- [thick, half right, looseness=1] (m1)
    };
  \end{feynman}
\end{tikzpicture}}}
~= \mathlarger{\sum_{j,\rho}}
~ 
 \vcenter{\hbox{\hspace{2mm}\begin{tikzpicture}[font=\footnotesize,inner sep=2pt]
  \begin{feynman}
  \vertex  (i1) at (0,0)   ;
  \vertex[left=0.1cm of i1] (i11) ;
  \vertex[above=0.6cm of i11] (i111) ;
  \vertex [above=1.2cm of i1] (i2) ;
  \vertex [right=1cm of i1] (d1) ;
  \vertex[below=2.5cm of d1] (dbot) [label=below:\(d\)];
  \vertex[above=3cm of d1] (d4) [label=above:\(d\)];
    \vertex[above=1.7cm of d1] (d2);
     \draw [thick] (dbot)-- (d2);
    \vertex[above=2cm of d1] (d3);
    \vertex[right=1cm of d1] (m0);
    \vertex [small,orange,dot][above=2.35cm of m0] (m1) {};
    %\draw [thick,rounded corners=1mm] (i1) -- (i2) --(m1);
    \vertex[right=1cm of m0] (j1)  ;
    \vertex[right=0.1cm of j1] (j11) ;
    \vertex[above=0.6cm of j11] (j111) [label=right:\(\overline{i}\)];
    \vertex[above=1.2cm of j1] (j2);
    %\draw [thick,rounded corners=1mm] (j1) -- (j2) --(m1);
    \vertex[right=1cm of d4] (m2) [label=above:\(0\)];
    \draw [dashed] (m1) -- (m2);
 \vertex (i3) at (0.85,-0.5);
 \vertex (i4) at (1.1,-0.6);
 \vertex[small,orange,dot][below=3.5cm of m1] (m3)[label=above:\(\beta\)] {};
 \vertex[below=0.6cm of m3] (e1);
  \vertex[small,orange,dot][above=0.5cm of dbot] (e2) [label=left:\(\alpha\)]{};
  \draw [thick,rounded corners=1mm] (m3) -- (e1) --(e2);
  \vertex (e3) at (1.8,-1.7) [label=\(e\)];
  \vertex[above right=0.5cm of i2] (a1)[label=left:\(i\)];
  \vertex[below=0.4cm of a1] (a2);
  \vertex[small,orange,dot][below=2cm of d4] (a3)[label=right:\(\rho\)]{};
    \draw [thick,rounded corners=1mm] (a3)--(a2) -- (a1) --(m1);
    \vertex (a4) at (0.4,0.1);
    \vertex[below=0.3cm of a4] (a6) [label=left:\(i\)];
    \vertex[small,orange,dot][below=0.6cm of a3] (a5) [label=right:\(\rho\)] {};
     \draw [thick,rounded corners=1mm] (a5) -- (a4) --(a6)--(i3);
     \vertex [below=0.3cm of a3] (b1)[label=left:\(j\)];
  \diagram*{
  (d3)--[thick] (d4),
  (m3)--[thick] (i4),
  (m3) -- [thick, half right, looseness=1] (m1)
    };
  \end{feynman}
\end{tikzpicture}}} 
~=~
\mathlarger{\sum_{j,\rho,\mu,\nu}}\mathrm{R}^{(id)j}_{\mu\rho}\,\mathrm{R}^{(di)j}_{\rho\nu}
  \vcenter{\hbox{\hspace{5mm}\begin{tikzpicture}[font=\footnotesize,inner sep=2pt]
  \begin{feynman}
  \vertex  (i1) at (0,0)   ;
  \vertex [above=1.2cm of i1] (i2) ;
  \vertex [right=1cm of i1] (d1) ;
  \vertex[below=2.5cm of d1] (dbot) [label=below:\(d\)];
  \draw [thick] (dbot)--(d4);
  \vertex[above=3cm of d1] (d4) [label=above:\(d\)];
    \vertex[above=1.6cm of d1] (d2);
    \vertex[above=2cm of d1] (d3);
    \vertex[right=1cm of d1] (m0);
    \vertex [small,orange,dot][above=2.35cm of m0] (m1) {};
    \vertex[right=1cm of m0] (j1)  ;
    \vertex[right=0.1cm of j1] (j11) ;
    \vertex[above=0.6cm of j11] (j111) [label=right:\(\overline{i}\)];
    \vertex[above=1.2cm of j1] (j2);
    %\draw [thick,rounded corners=1mm] (j1) -- (j2) --(m1);
    \vertex[right=1cm of d4] (m2) [label=above:\(0\)];
    \draw [dashed] (m1) -- (m2);
 \vertex (i3) at (0.85,-0.5);
 \vertex (i4) at (1.1,-0.6);
 \vertex[small,orange,dot][below=3.5cm of m1] (m3)[label=above:\(\beta\)] {};
 \vertex[below=0.6cm of m3] (e1);
  \vertex[small,orange,dot][above=0.5cm of dbot] (e2) [label=left:\(\alpha\)]{};
  \draw [thick,rounded corners=1mm] (m3) -- (e1) --(e2);
  \vertex (e3) at (1.8,-1.7) [label=\(e\)];
  \vertex[small,orange,dot][below=1.2cm of d4] (c1) [label=left:\(\mu\)]{};
  \vertex (c2) at (1.5,1.8) [label=\(i\)];
  \vertex (c4) at (1.5,-0.8) [label=\(i\)];
  \vertex[below=1cm of c1] (c5) [label=left:\(j\)];
  \vertex[small,orange,dot][below=2cm of c1] (c3) [label=left:\(\nu\)] {};
  \vertex[below=0.9cm of c3] (c6) [label=left:\(d\)];
  \diagram*{
  (m3) -- [thick, half right, looseness=1] (m1),
  (c1) -- [thick, quarter left, looseness=0.4] (m1),
  (c3)--[thick, quarter right, looseness=0.4] (m3)
    };
  \end{feynman}
\end{tikzpicture}}} 
\nonumber \\[2em]
 &= ~\mathlarger{\sum_{j,\rho,\mu,\nu,\gamma}}\mathrm{R}^{(id)j}_{\mu\rho}\,\mathrm{R}^{(di)j}_{\rho\nu} \,\mathrm{G}^{(di\overline{i})d}_{\gamma j\nu,\beta e\alpha}
 \vcenter{\hbox{\hspace{5mm}\begin{tikzpicture}[font=\footnotesize,inner sep=2pt]
  \begin{feynman}
  \vertex  (i1) at (0,0)   ;
  \vertex [above=1.2cm of i1] (i2) ;
  \vertex [right=1cm of i1] (d1) ;
  \vertex[below=1.5cm of d1] (dbot) [label=below:\(d\)];
  \draw [thick] (dbot)--(d4);
  \vertex[above=3cm of d1] (d4) [label=above:\(d\)];
    \vertex[above=1.6cm of d1] (d2);
    \vertex[above=2cm of d1] (d3);
    \vertex[right=1cm of d1] (m0);
    \vertex [small,orange,dot][above=2.35cm of m0] (m1) {};
    \vertex[right=1cm of m0] (j1)  ;
    \vertex[right=0.1cm of j1] (j11) ;
    \vertex[above=0.6cm of j11] (j111);
    \vertex[above=1.2cm of j1] (j2);
    %\draw [thick,rounded corners=1mm] (j1) -- (j2) --(m1);
    \vertex[right=1cm of d4] (m2) [label=above:\(0\)];
    \draw [dashed] (m1) -- (m2);
 \vertex (i3) at (0.85,-0.5);
 \vertex (i4) at (1.1,-0.6);
 \vertex[below=0.6cm of m3] (e1);
  \vertex[small,orange,dot][below=1.2cm of d4] (c1) [label=left:\(\mu\)]{};
  \vertex (c2) at (1.5,1.8) [label=\(i\)];
  \vertex[below=1cm of c1] (c5) [label=left:\(j\)];
  \vertex[small,orange,dot][below=2cm of c1] (c3) [label=left:\(\gamma\)] {};
  \vertex[below=1.5cm of m1] (r1);
  \draw [thick,rounded corners=1mm] (c3) -- (r1) --(m1);
  \vertex (r2) at (1.6,-0.1)  [label=:\(\overline{i}\)];
  \diagram*{
  (c1) -- [thick, quarter left, looseness=0.4] (m1)
    };
  \end{feynman}
\end{tikzpicture}}} 
~=\mathlarger{\sum_{j,\rho,\mu,\nu,\gamma}}\mathrm{R}^{(id)j}_{\mu\rho}\,\mathrm{R}^{(di)j}_{\rho\nu} \,\mathrm{G}^{(di\overline{i})d}_{\gamma j\nu,\beta e\alpha}\,\mathrm{F}^{(di\overline{i})d}_{0,\mu j\gamma}
\vcenter{\hbox{\hspace{5mm}\begin{tikzpicture}[font=\footnotesize,inner sep=2pt]
  \begin{feynman}
  \vertex  (i1) at (0,0) [label=above:\(d\)]   ;
  \vertex [below=4.4cm of i1] (i2) [label=below:\(d\)] ;
  \diagram*{
  (i1) -- [thick] (i2)
    };
  \end{feynman}
\end{tikzpicture}}} 
\end{align}
where at the first step we used the completeness property \eqref{completeness}, at the second step the definition of the braiding matrices \eqref{Bmatrix}, at the third step the $\mathrm{G}$-matrix move \eqref{Gmatrix} involving the vertices $\alpha,\beta$ and at the final step the $\mathrm{F}$-matrix move \eqref{Fmatrix} involving $\gamma,\mu$. Now substituting \eqref{Qcalc4} into \eqref{Qcalc3} we obtain
\begin{equation}\label{Qgeneralappendix}
      Q^{\alpha e\beta}_{id}=\sum_{j,\rho,\mu,\nu,\gamma}\mathrm{R}^{(id)j}_{\mu\rho}\, \mathrm{R}^{(di)j}_{\rho\nu}\, \mathrm{G}^{(di\overline{i})d}_{\gamma j \nu,\beta e \alpha}\,\mathrm{F}^{(di\overline{i})d}_{0,\mu j\gamma}\,.
\end{equation}

The coefficient $Q^0_{id}$ which contains information about the commutation or anticommutation property between the defect $X_d$ and the bulk field $\phi_{i,\overline{i}}$ can be alternatively obtained in terms of the $S$-matrix by composing \eqref{Qcalc1} with \eqref{Qcalc2} for $e^\prime=0$  and then taking the trace:
\begin{equation}
      \mathlarger{}\vcenter{\hbox{\hspace{-10mm}\begin{tikzpicture}[font=\footnotesize,inner sep=2pt]
  \begin{feynman}
  \vertex  (i1) at (0,0)   ;
  \vertex[left=0.1cm of i1] (i11) ;
  \vertex[above=0.6cm of i11] (i111) [label=right:\(\;\,i\)];
  \vertex [above=1.2cm of i1] (i2) ;
  \vertex [right=1cm of i1] (d1) ;
  \vertex[small,orange,dot][below=1.5cm of d1] (dbot){} ;
  \vertex[below=0.75cm of dbot] (dbot2)[label=below:\(0\)];
  \vertex[small,orange,dot][above=3cm of d1] (d4){} ;
  \vertex[above=0.55cm of d4] (d5) [label=above:\(0\)];
    \vertex[above=1.6cm of d1] (d2);
    \vertex[above=2cm of d1] (d3);
    \vertex[right=1cm of d1] (m0);
    \vertex [small,orange,dot][above=2.35cm of m0] (m1) {};
    \draw [thick,rounded corners=1mm] (i1) -- (i2) --(m1);
    \vertex[right=1cm of m0] (j1)  ;
    \vertex[right=0.1cm of j1] (j11) ;
    \vertex[above=0.6cm of j11] (j111) [label=right:\(\overline{i}\)];
    \vertex[above=1.2cm of j1] (j2);
    %\draw [thick,rounded corners=1mm] (j1) -- (j2) --(m1);
    \vertex[above=1.2cm of m1] (m2) [label=above:\(0\)];
    \draw [dashed] (m1) -- (m2);
 \vertex (i3) at (0.85,-0.5);
 \vertex (i4) at (1.1,-0.6);
 \vertex[small,orange,dot][below=3.5cm of m1] (m3){};
 \vertex[below=1.2cm of m3] (m4)[label=below:\(0\)];
 \vertex[below=0.6cm of m3] (e1);
 \vertex (d) at (1.2,0.4) [label=:\(d\)];
\draw [thick] (dbot)-- (d1);
  \diagram*{
  (d1)--[thick] (d2),
  (d3)--[thick] (d4),
  (i3)--[thick] (i1),
  (m3)--[thick] (i4),
  (m3) -- [thick, half right, looseness=1] (m1),
  (d4) -- [thick, half right, looseness=1] (dbot),
  (m3)--[dashed] (m4),
  (dbot)--[dashed] (dbot2),
  (d4)--[dashed] (d5)
    };
  \end{feynman}
\end{tikzpicture}}}
~\hspace{5mm}= Q^{0}_{id}
~ 
\vcenter{\hbox{\hspace{5mm}\begin{tikzpicture}[font=\footnotesize,inner sep=2pt]
  \begin{feynman}
   \vertex[small,orange,dot]  (i1) at (0,0) {}  ;
  \vertex [small,orange,dot][above=5cm of i1] (i2) {} ;
  \vertex[above=0.5cm of i2] (top) [label=above:\(0\)];
  \vertex[below=0.5cm of i1] (bot)[label=below:\(0\)];
  \draw [thick] (i1)--(i2);
  \vertex (d) at (-0.2,2.5) [label=:\(d\)];
  \vertex[small,orange,dot][below=0.7cm of i2] (i3){};
  \vertex[below=0.6cm of i3] (i4);
  \vertex[small,orange,dot][right=1cm of i4] (r1){};
  \vertex[small,orange,dot][below=2cm of r1] (r2){};
  \vertex[small,orange,dot][above=0.7cm of i1] (i5){};
  \vertex (i) at (1.9,2.5) [label=\(i\)];
  \diagram*{
   (i1) -- [thick, half left, looseness=1] (i2),
   (r1)--[dashed] (i3),
    (r1) -- [thick, half left, looseness=1] (r2),
     (r1) -- [thick, half right, looseness=1] (r2),
     (i5)--[dashed] (r2),
     (i1)--[dashed] (bot),
     (i2)--[dashed] (top)
    };
  \end{feynman}
\end{tikzpicture}}}
\end{equation}
In the left-hand side of the last equation we can now  use the $S$-matrix definition \eqref{sdef}, while on the right-hand side we can collapse the bubbles using \eqref{bubble}. In the end we obtain
\begin{equation}\label{Q0coeff}
    Q^0_{id}=\frac{S_{id}}{S_{00}\operatorname{dim}(i)\operatorname{dim}(d)}\,.
\end{equation}
%\begin{remark}
    Notice that using \eqref{Qgeneralappendix} for $e=0$ and substituting in \eqref{Q0coeff} we obtain an expression for the $S$-matrix  in terms of the braiding and fusing matrices:
    \begin{equation}\label{smatrixapp1}
        S_{id}=S_{00}\operatorname{dim}(i)\dim(d)\sum_{j,\rho,\mu,\nu,\gamma} \mathrm{R}^{(id)j}_{\mu\rho}\, \mathrm{R}^{(di)j}_{\rho\nu}\, \mathrm{G}^{(di\overline{i})d}_{\gamma j \nu,0}\,\mathrm{F}^{(di\overline{i})d}_{0,\mu j\gamma}\,.
    \end{equation}
    For the case that the dimensions $N_{ij}^k$ of all hom-spaces $\operatorname{Hom}(i\times j,k)$ are either $0$ or $1$ \eqref{smatrixapp1} yields
    \begin{equation}\label{smatrixapp2}
        S_{id}=S_{00}\operatorname{dim}(i)\dim(d)\sum_{j\in\mathcal{I}} \mathrm{R}^{(id)j}\, \mathrm{R}^{(di)j}\, \mathrm{G}^{(di\overline{i})d}_{j,0}\,\mathrm{F}^{(di\overline{i})d}_{0, j}\,.
    \end{equation}
    An equivalent expression was obtained in equation (2.64) in \cite{FRS1}. This is not an identical expression to our \eqref{smatrixapp2} as we used a different choice of $\mathrm{F}$ and $\mathrm{G}$ moves to evaluate the diagram on the left-hand side of \eqref{Qcalc3}.  The two expressions coincide after using  equation (2.43) from \cite{FRS1}.
%\end{remark}

\section{Details for \texorpdfstring{$\aff$}{affine \mathrm{SU(3)}} statements } \label{appendix1}

In this appendix we give a proof of lemma \ref{lemma2} and present details of proof of theorem \ref{thmsu3}. We start with the lemma.
%We present the proof of lemma \ref{lemma2} that was used in section \eqref{sec4.4}.
\begin{replemma}{lemma2}
    Let $T_1,T_2,T_3 \subseteq \mathcal{R}_{(a,b),(b,a)}^k$ be the off-diagonal representations that appear in the direct product decomposition of the  $\aff$ fusion $(a,b)\times (b,a)$. Then 
   \begin{enumerate}[i),wide, labelwidth=!, labelindent=0pt]\item \begin{equation}
        \{(k,0),(0,k)\}  \nsubseteq T_1\cup T_3
    \end{equation}
    and \item
    \begin{equation}
          \{(k,0),(0,k)\}  \subseteq T_2 \Longleftrightarrow a=b \quad \text{and}\quad k=3a
    \end{equation}
    \end{enumerate}
\end{replemma}
\begin{proof}
    It suffices to work with only one simple current say $(0,k)$ since they either both appear in $(a,b)\times (b,a)$ or none does. 
    \begin{enumerate}[i),wide, labelwidth=!, labelindent=0pt]
\item Suppose that $\{(0,k)\}\subseteq T_1$. Recalling the parametrisation of the off-diagonal representations, we have that $(0,k)$ is of the form $(p-2l,p+l)$ when $p=\frac{2}{3}k$ and $l=\frac{1}{3}k$. Then, since $\{(0,k)\}\subseteq T_1$, the following holds
\begin{equation}
    a+2 \leq \frac{2}{3}k \leq  \operatorname{min}\left(a+b,k-a-1,\floor*{\frac{2k}{3}},\floor*{\frac{b+k}{2}}\right)
\end{equation}
which gives the system
 \begin{equation}
         \left\{\begin{array}{lr}
     a+2\leq \frac{2}{3}k \\
        \frac{2}{3}k\leq a+b \\
        \frac{2}{3}k \leq k-a-1 \\
         \frac{2}{3}k\leq \floor*{\frac{b+k}{2}}
        \end{array}\right. 
 \Longrightarrow    \left\{\begin{array}{lr}
     2k\geq 3a+6 \\
        2k\leq 3a+3b \\
        2k \geq  6a+6 \\
         2k\leq 6b
        \end{array}\right. 
    \end{equation}
    Since $a\geq b$ the last two inequalities give a contradiction, hence $\{(0,k)\}\nsubseteq T_1$.
    
    Now suppose  $\{(0,k)\}\subseteq T_3$, then the following holds
    \begin{equation}
        2\leq \frac{2}{3}k \leq \operatorname{min}(k-a-1,2b-2)
    \end{equation}
    which gives the system
    \begin{equation}
         \left\{\begin{array}{lr}
     2k\geq 6 \\
        k \geq  3a+3\\
        k \leq 3b-3 \\
        \end{array}\right. 
    \end{equation}
    Since $a\geq b$ the last two inequalities give a contradiction, hence  $\{(0,k)\}\nsubseteq T_3$.
    \item 
    $(\Rightarrow)$: Suppose $\{(0,k)\}\subset T_2$. This implies that
    \begin{equation}
        b+1 \leq \frac{2}{3}k \leq \operatorname{min}(a+b,k-a)
    \end{equation}
    is true for some values of $k$. Thus we obtain the system
    \begin{equation}\label{lemma1}
         \left\{\begin{array}{lr}
     2k\geq 3b+3\\
        2k \leq  3a+3b\\
        2k \geq 6a \\
        \end{array}\right. 
        \Longrightarrow  \left\{\begin{array}{lr}
        2k \leq  3a+3b\\
        2k \geq 6a \\
        \end{array}\right.
    \end{equation}
    which must be satisfied for some values of $k$. For this to happen, since $a\geq b$ we must have $a=b$ in which case there is a single level $k=3a$ that satisfies \eqref{lemma1}. Finally, for $a=b$ and $k=3a$ we have that the value of $l$ which gives the simple current $(0,k)$ is $l=a$. It is straightforward to verify that this value appears in the $l$ interval for $T_2$ specified in \eqref{offdiag3} without imposing any further condition involving $k$ or $a$.
    
    $(\Leftarrow)$ Straightforward to check this by substituting $a=b$ and $k=3a$ in the definition \eqref{offdiag3} of $T_2$.
    \end{enumerate}
\end{proof}

We next turn to details of proof of theorem \ref{thmsu3}. We list the statements that we have proven in order to obtain tables \ref{table1}, \ref{table2} and \ref{table3}. We include some of the proofs in order to showcase the procedure.
\begin{description}[wide, labelwidth=!, labelindent=0pt]
\item{Statement 1:}  For all $(i,j)\in Q_1$:
\begin{align}
     &w_1,w_4,w_5\in \mathcal{R}^k_{(a,b),(i,j)},\\
     &w_6\in\mathcal{R}^k_{(a,b),(i,j)} \quad\text{if}\quad b\geq \ceil*{\frac{a}{2}}+2,\; 2a+3\leq k\leq a+2b-1\,. \label{claim1b}
\end{align}
\begin{proof}
    \begin{enumerate}[(a)]
         \item \label{proof1a} First we show that $\forall (i,j)\in Q_1$ we have $w_1\in \mathcal{R}^k_{(a,b),(i,j)}$. We use the BMW formula with $\lambda=(a,b)$, $\mu=(i,j)$ and $\nu=w_1=(a-1,b+2)$ and the goal is to prove $N^{(k)w_1}_{(a,b),(i,j)}\geq 1$ which is equivalent to $\operatorname{min}(k,\kmax)\geq \kmin$. Since $(i,j)\in Q_1$ we have $i=2k-3p$ and $j=3p-k$, so we find
        \begin{equation}
            k_0^{\text{max}}=a+b+1+k-p
        \end{equation}
        and
        \begin{equation}
            k_0^{\text{min}}=\operatorname{max}(k,a+b+1+2p-k,a+p,b+1+p)\,.
        \end{equation}
        We notice that $\kmax \geq k$ since $p\leq a+b$. Then it suffices to show 
        \begin{equation}
            \operatorname{min}(k,k_0^{\text{max}})\geq k_0^{\text{min}}\Longleftrightarrow k\geq \kmin \Longleftrightarrow \left\{\begin{array}{lr}
        k \geq a+b+1 \\
        k\geq a+b+1+2p-k \\
        k\geq a+p \\
        k\geq b+1 +p
        \end{array}\right. 
        \end{equation}
    It is clear that the system of four inequalities is satisfied due to the constraints on $k$ and $p$ in the set $Q_1$ \eqref{d1}.
    \item  We will now show that $\forall(i,j)\in Q_1:\; w_4\in  \mathcal{R}^k_{(a,b),(i,j)}$. We proceed as described in (a) to find
    \begin{equation}
        \kmax=a+b+k-p \quad\text{and} \quad \kmin=\operatorname{max}(k,a+b+2p-k,b+1+p,a-1+p)\,.
    \end{equation}
    Since $\kmax \geq k$ it suffices to show:
    \begin{equation}
        k\geq \kmin \Longleftrightarrow \left\{\begin{array}{lr}
        k \geq a+b+2p-k \\
        k\geq b+1+p \\
        k\geq a-1+p 
        \end{array}\right. \,.
    \end{equation}
    The system of three inequalities above is satisfied due to $p\leq k-a-1$ in $Q_1$.
     \item \label{proof1c} We proceed with showing $\forall (i,j)\in Q_1:\; w_5 \in  \mathcal{R}^k_{(a,b),(i,j)}$. We calculate
    \begin{equation}
        \kmax=a+b+1+k-p \quad\text{and}\quad \kmin =\operatorname{max}(a+b+2,k,a+b+1+2p-k,a+1+p)\,.
    \end{equation}
    It suffices to show
    \begin{equation}
        \operatorname{min}(k,\kmax)\geq \kmin \Longleftrightarrow k\geq \kmin \Longleftrightarrow \left\{\begin{array}{lr}
        k \geq a+b+2 \\
        k\geq a+b+1+2p-k \\
        k\geq a+1+p 
        \end{array}\right. \,.
    \end{equation}
The system of   three inequalities is satisfied due to the fact $p\leq k-a-1$ in $Q_1$.
\item \label{proof1d} Finally we will prove \eqref{claim1b}. In this case we find
        \begin{equation}
            \kmax=a+b-1+k-p \quad\text{and}\quad \kmin=\operatorname{max}(k,b+p,a-1+p,a+b-1+2p-k)\,.
        \end{equation}
        It is clear to check that due to the restrictions on $k$ and $p$  we obtain $\kmax\geq k$. Then it suffices to prove 
        \begin{equation}
            k\geq \kmin \Longleftrightarrow  \left\{\begin{array}{lr}
        k \geq b+p \\
        k\geq a-1+p\\
        k\geq a+b-1+2p-k
        \end{array}\right. \,.
        \end{equation}
        The system of three inequalities above is satisfied since $p\leq k-a-1$ in $Q_1$.
    \end{enumerate}
\end{proof}
\item{Statement 2:} For all $(i,j)\in \overline{Q}_1$:
\begin{align}
    &w_1,w_3,w_5\in \mathcal{R}^k_{(a,b),(i,j)},\\
    &w_4\in \mathcal{R}^k_{(a,b),(i,j)} \quad \text{if}\quad b\geq \ceil*{\frac{a+3}{2}},\;k=a+2b, \label{claim2b}\\
   & w_6\in \mathcal{R}^k_{(a,b),(i,j)} \quad \text{if}\quad b\geq \ceil*{\frac{a}{2}}+2,\; 2a+3\leq k\leq a+2b-1\,. \label{claim2c}
\end{align}
\begin{proof}
    \begin{enumerate}[(a)]
         \item We first show that $\forall(i,j)\in \overline{Q}_1$ we have $w_1\in \mathcal{R}^k_{(a,b),(i,j)}$.  Since $(i,j)\in \overline{Q}_1$ we have $i=3p-k,\;j=2k-3p$ and we obtain
    \begin{equation}
        \kmax=a+b+k-p
    \end{equation}
    and
    \begin{equation}
        \kmin=\operatorname{max}(k,b+1+p,a+b+2p-k,a-1+p)
    \end{equation}
    which is less or equal to the value of $\kmin$ in \ref{proof1a} of Statement 1. We also have $k\leq \kmax$ since $p\leq a+b$. Therefore, we immediately can conclude from the calculation in \ref{proof1a}  that $\operatorname{min}(k,\kmax)\geq \kmin \Longleftrightarrow k\geq \kmin$ holds, which proves the statement.
\item We proceed with  proving that $\forall (i,j)\in \overline{Q}_1:\; w_3\in \mathcal{R}^k_{(a,b),(i,j)}$.  In this case we find
    \begin{equation}
        \kmax=a+b+k-p \quad\text{and}\quad \kmin=\operatorname{max}(k,a+b+2p-k,a+1+p)\,.
    \end{equation}
     Since $\kmax \geq k$ it suffices to show:
    \begin{equation}
        k\geq \kmin \Longleftrightarrow \left\{\begin{array}{lr}
        k \geq a+b+2p-k \\
        k\geq a+1+p 
        \end{array}\right. \,.
    \end{equation}
     The system of two inequalities above is satisfied due to $p\leq k-a-1$ in $\overline{Q}_1$.
    \item We now show that $\forall(i,j)\in \overline{Q}_1$ we have $w_5\in \mathcal{R}^k_{(a,b),(i,j)}$. In this case we find
    \begin{equation}
        \kmax=a+b+1+k-p \quad \text{and}\quad \kmin=\operatorname{max}(a+b+2,k,b+1+p,a+p,a+b+1+2p-k)\,.
    \end{equation}
    Notice that the $\kmin$ above is less or equal to the value of $\kmin$ found in \ref{proof1c} of Statement 1  and also the values of $\kmax$ are identical. Therefore it immediately follows that $\operatorname{min}(k,\kmax)\geq \kmin$.
    \item We will now prove \eqref{claim2b}. We find
    \begin{equation}
        \kmax= 2a+3b-p-1 \quad \text{and}\quad \kmin=\operatorname{max}(a+2b,a-1+p,2p-b-1)\,.
    \end{equation}
    By noticing the $p$ interval in $\overline{Q}_1$ and since $k\leq 2a+b$ we conclude that $p\neq a+b$, hence $\kmax \geq k$. Then it suffices to show
    \begin{equation}
            k\geq \kmin \Longleftrightarrow  \left\{\begin{array}{lr}
        a+2b\geq a-1+p\\
        a+2b\geq 2p-b-1
        \end{array}\right. \,.
    \end{equation}
    This system of two inequalities is satisfied because $p\leq k-a-1=2b-1$ in $\overline{Q}_1$ for $k=a+2b$.
    \item Finally we show \eqref{claim2c}. We calculate
        \begin{equation}
            \kmax=a+b-1+k-p \quad\text{and}\quad \kmin=\operatorname{max}(k,a+b-1+2p-k)\,.
        \end{equation}
        We notice that $\kmax$ is the same as the $\kmax$ in \ref{proof1d} of Statement 1 while $\kmin$ is less or equal than the $\kmin$ in \ref{proof1d}. There we immediately conclude that $\operatorname{min}(k,\kmax)\geq \kmin$.
    \end{enumerate}
\end{proof}
\item{Statement 3:} For all $(i,j)\in Q_2$:
\begin{align}
    &w_6\in \mathcal{R}^k_{(a,b),(i,j)}\quad\text{if}\quad a+b+1\leq k\leq a+2b-1,\\
    &w_4\in \mathcal{R}^k_{(a,b),(i,j)} \quad \text{if}\quad a\neq b,\; a+2b\leq k\leq 2a+b,\\
    & w_1\in \mathcal{R}^k_{(a,b),(i,j)} \quad\text{if}\quad a\neq b,\; a+2b+1\leq k\leq 2a+b\,.
\end{align}
\item{Statement 4:} For all $(i,j)\in \overline{Q}_2$:
\begin{align}
    &w_6\in \mathcal{R}^k_{(a,b),(i,j)} \quad \text{if}\quad a+b+1\leq k\leq a+2b-1\\
    & w_1\in \mathcal{R}^k_{(a,b),(i,j)} \quad \text{if}\quad a\neq b,\; a+2b\leq k\leq 2a+b,\\
    & w_4\in  \mathcal{R}^k_{(a,b),(i,j)} \quad \text{if}\quad a\neq b,\;k=a+2b\,.
\end{align}
\item{Statement 5:} For all $(i,j)\in Q_3$:
\begin{equation}
    w_2,w_3,w_4,w_6\in  \mathcal{R}^k_{(a,b),(i,j)}\,.
\end{equation}
\item{Statement 6:} For all $(i,j)\in \overline{Q}_3$:
\begin{equation}
    w_1,w_4,w_6\in  \mathcal{R}^k_{(a,b),(i,j)}\,.
\end{equation}
\item{Statement 7:} For all $(i,i)\in Q_4$:
\begin{equation}
    w_6\in  \mathcal{R}^{a+b}_{(a,b),(i,i)}\,.
\end{equation}
\item{Statement 8:} For all $(i,i)\in Q_5$:
\begin{align}
    &w_1 \in  \mathcal{R}^k_{(a,b),(i,i)} \\
    & w_4\in  \mathcal{R}^k_{(a,b),(i,i)} \quad \text{if} \quad b\neq0,\; b\leq a-2 \\
    & w_6 \in  \mathcal{R}^k_{(a,b),(i,i)} \quad \text{if} \quad b\neq 0
\end{align}
\item{Statement 9:} For all $(i,i)\in Q_6$:
\begin{align}
    &w_1,w_2,w_5\in  \mathcal{R}^k_{(a,b),(i,i)},\\
    & w_6 \in  \mathcal{R}^k_{(a,b),(i,i)} \quad\text{if}\quad b\geq \ceil*{\frac{a+3}{2}},\;2a+2\leq k\leq a+2b-1,\\
    & w_6\in  \mathcal{R}^k_{(a,b),(i,i)} \quad\text{if}\quad b\geq \ceil*{\frac{a}{2}}+1,\;a\neq b,\; k=a+2b,\, a \;\text{is even}\,.
\end{align}
\item{Statement 10:} For all $(i,j)\in C_1\cup \overline{C}_1$:
\begin{equation}
    w_1,w_5\in  \mathcal{R}^k_{(a,b),(i,j)}\,.
\end{equation}
\item{Statement 11:} For all $(i,j)\in C_2$:
\begin{equation}
    w_1,w_5\in  \mathcal{R}^k_{(a,b),(i,j)}\,.
\end{equation}
\item{Statement 12:} For all $(i,j)\in \overline{C}_2$:
\begin{align}
     &w_5\in  \mathcal{R}^k_{(a,b),(i,j)},\\
     & w_1\in  \mathcal{R}^k_{(a,b),(i,j)} \quad\text{if}\quad a\neq b\,.
\end{align}
\item{Statement 13:} For all $(i,j)\in C_3\cup \overline{C}_3$:
\begin{align}
   & w_1,w_4,w_5,w_6\in  \mathcal{R}^k_{(a,b),(i,j)}\,.
\end{align}
\item{Statement 14:} For all $(i,i)\in C_5$:
\begin{equation}
    w_1\in  \mathcal{R}^k_{(a,b),(i,i)}\,.
\end{equation}
\item{Statement 15:} For all $(i,i)\in C_6$:
\begin{equation}
    w_1,w_5 \in  \mathcal{R}^k_{(a,b),(i,i)}\,.
\end{equation}

\end{description}

 \bibliographystyle{JHEP}
 %\bibliography{biblio.bib}

%% or
%% [B] Manual formatting (see below)
%% (i) We suggest to always provide author, title and journal data or doi:
%% in short all the informations that clearly identify a document.
%% (ii) please avoid comments such as "For a review'', "For some examples",
%% "and references therein" or move them in the text. In general, please leave only references in the bibliography and move all
%% accessory text in footnotes.
%% (iii) Also, please have only one work for each \bibitem.

\end{document}